\documentclass[preprint,12pt,authoryear]{elsarticle}

\usepackage[a4paper,left=0.95in,right=0.95in,top=1.0in,bottom=1.0in]{geometry}

\usepackage[T1]{fontenc}
\usepackage[utf8]{inputenc}
\usepackage{amsmath,amssymb,amsthm,bm,mathtools}
\usepackage{graphicx}
\usepackage{booktabs}
\usepackage{longtable}
\usepackage{array}
\usepackage{multirow}
\usepackage{tabularx}
\usepackage{threeparttable}
\usepackage{caption}
\usepackage{subcaption}
\usepackage{algorithm}
\usepackage{algpseudocode}
\usepackage{enumitem}
\usepackage{setspace}
\usepackage[protrusion=true,expansion=false]{microtype}
\usepackage{xcolor}
\usepackage{url}
\usepackage[colorlinks=true,linkcolor=blue,citecolor=blue,urlcolor=blue]{hyperref}
\usepackage[nameinlink,noabbrev]{cleveref}
\usepackage{placeins}
\usepackage{needspace}

\graphicspath{{figures/}}
\allowdisplaybreaks
\newtheorem{assumption}{Assumption}
\newtheorem{proposition}{Proposition}
\newtheorem{theorem}{Theorem}
\newtheorem{corollary}{Corollary}
\newtheorem{lemma}{Lemma}

\crefname{assumption}{Assumption}{Assumptions}
\crefname{corollary}{Corollary}{Corollaries}
\Crefname{corollary}{Corollary}{Corollaries}
\crefformat{appendixref}{#2#1#3}
\Crefformat{appendixref}{#2#1#3}
\crefrangeformat{appendixref}{#3#1#4--#5#2#6}
\Crefrangeformat{appendixref}{#3#1#4--#5#2#6}
\crefmultiformat{appendixref}
  {#2#1#3}{ and #2#1#3}{, #2#1#3}{, and #2#1#3}
\Crefmultiformat{appendixref}
  {#2#1#3}{ and #2#1#3}{, #2#1#3}{, and #2#1#3}

\newcommand{\ind}{\mathbf{1}}

\newcommand{\E}{\mathbb{E}}

\newcommand{\loss}{\ell}
\newcolumntype{Y}{>{\raggedleft\arraybackslash}X}

\begin{document}

\begin{frontmatter}

\title{Efficient conformal prediction intervals for time series: Online PID-Expert aggregation}

\author[buaa]{Guodong Liu}
\ead{guodongliu@buaa.edu.cn}
\author[buaa]{Yanfei Kang}
\ead{yanfeikang@buaa.edu.cn}
\author[carbon]{Ren Miao}
\ead{miaoren@d-shuji.cn}
\author[cas]{Xiaoqian Wang\corref{cor}}
\ead{xiaoqian.wang@amss.ac.cn}
\cortext[cor]{Corresponding author}

\affiliation[buaa]{organization={School of Economics and Management, Beihang University}, city={Beijing 100191}, country={China}}
\affiliation[carbon]{organization={Beijing Shuji Intelligent Technology Co., Ltd.}, city={Beijing 100176}, country={China}}
\affiliation[cas]{organization={Center for Forecasting Science, Academy of Mathematics and Systems Science, Chinese Academy of Sciences}, city={Beijing 100190}, country={China}}

\begin{abstract}

For a given point forecaster, proportional--integral--derivative (PID) calibration configurations can attain similar overall coverage yet produce different interval widths. We introduce PID-Expert, an online aggregation method for improving the efficiency of conformal prediction intervals for time series. PID-Expert combines thresholds from a fixed library of PID calibrators, each evolving under its own coverage feedback. Aggregation weights depend on normalized interval width and miscoverage, while a shared multiplier adapts the miscoverage penalty using feedback from the reported interval. We establish local regret bounds for weighted expert losses under the realized multiplier sequence and, separately, a pathwise upper bound on time-averaged aggregate miscoverage, with control in expectation and almost surely under a stability condition. Across four simulation settings and two real-data applications, PID-Expert produces narrower mean intervals than a prespecified Conformal PID benchmark while keeping overall empirical coverage close to the nominal level. Compared with expert selection and equal averaging, aggregation generally provides a more balanced coverage--width trade-off across forecasting settings. Rolling analyses further reveal local coverage--efficiency trade-offs, particularly following abrupt distributional shifts. Overall, PID-Expert reduces reliance on a single PID configuration while improving interval efficiency.

\end{abstract}

\begin{keyword}
Online conformal prediction \sep Adaptive calibration \sep Expert aggregation \sep Prediction interval efficiency \sep Time series forecasting
\end{keyword}

\end{frontmatter}

\section{Introduction}

Online prediction intervals need to keep miscoverage near a target rate while avoiding unnecessary width. This balance between calibration and sharpness is central to probabilistic forecasting \citep{gneiting2007,gneiting2014} and interval forecasting practice \citep{chatfield1993,hyndman2021,petropoulos2022}. For a given point forecaster, calibration rules with similar overall coverage can still produce intervals of different widths. Online conformal methods adjust calibration through realized coverage errors as new observations arrive \citep{gibbs2021,angelopoulos2023pid}. Even when these adjustments bring overall coverage close to the target, the choice of calibration rule can leave room to improve interval efficiency.

Conformal proportional--integral--derivative (PID) control makes this choice concrete \citep{angelopoulos2023pid}. It updates score thresholds through recent coverage errors, accumulated coverage deviations, and forecasts of future nonconformity scores. The proportional gain, integral gain, and saturation constant determine how these feedback signals affect the threshold path. Faster responses can produce more variable intervals, while slower or more conservative responses can require wider intervals. Coverage feedback adjusts an individual controller but does not directly compare its interval efficiency with that of other configurations. Thus, even with a PID calibrator in place, a practical choice remains: which feedback configuration avoids unnecessary width while keeping coverage close to the target?

To assess how much interval efficiency depends on this choice, we compare PID configurations under a common forecasting setup. In the nonlinear simulation in \Cref{sec:experiments}, 343 PID controllers share the same point forecasts, nonconformity scores, and scorecaster (\Cref{fig:nonlinear-expert-scatter}). Their empirical coverage ranges from $0.901$ to $0.909$ at a target of $0.9$, yet their mean interval widths differ. The prespecified Conformal PID controller has mean width $0.360$, compared with $0.352$ for PID-Expert and $0.348$ for the narrowest library member selected retrospectively among those with empirical coverage of at least $0.9$. These differences motivate combining PID configurations online, using their observed performance to guide the construction of prediction intervals.

The closest existing approaches address related choices at different stages of forecasting and calibration. Adaptive conformal inference (ACI), introduced by \citet{gibbs2021}, adjusts the calibration level in response to observed coverage errors. AgACI, as originally proposed by \citet{zaffran2022}, aggregates interval bounds from ACI predictors with different step sizes using Bernstein online aggregation \citep{wintenberger2017}, while DtACI \citep{gibbs2024} adapts their influence through their calibration levels. These methods reduce reliance on a single ACI learning rate. Forecast combinations address uncertainty at the predictive-model stage \citep{Petropoulos2018,Wang2023}. Model-selection approaches also adapt across model-specific conformal predictors \citep{hajihashemi2024ensemble}, and Bellman conformal inference optimizes interval length using families of nominal prediction intervals \citep{yang2024ts}. Our focus is the efficiency left by PID calibration configurations for a shared point forecaster. We combine the threshold paths generated by distinct feedback histories and use coverage of the final reported interval to adjust how these paths are evaluated.

We propose \emph{PID-Expert aggregation} with two connected feedback layers. At the expert layer, each PID controller maintains its own coverage history and generates a candidate threshold. Keeping these states separate preserves the different calibration paths available for comparison. At the aggregation layer, weights combine the candidate thresholds and are updated using normalized interval width and expert miscoverage. Coverage of the reported interval updates a common multiplier that sets the relative penalty on miscoverage in these expert losses. This additional feedback is needed because coverage of an averaged threshold is not the weighted average of the experts' coverage indicators. The multiplier therefore brings the reported interval's coverage into the comparison of calibration paths.

The paper makes three contributions. First, PID-Expert aggregation turns reliance on one prespecified PID configuration into an online comparison of independently evolving calibration paths, with an explicit width--miscoverage loss informed by aggregate coverage. Second, the analysis separates the two questions created by this design: local regret bounds compare weighted expert losses with a fixed library member under the realized multiplier sequence, while a separate pathwise relation links aggregate miscoverage to multiplier growth. A stability condition then gives time-averaged upper control in expectation and almost surely; the regret bound also motivates reference aggregation parameters. Third, four simulation settings and two real-data applications evaluate PID-Expert aggregation against the prespecified Conformal PID benchmark, DtACI, and AgACI, together with expert selection and simple averaging as structured benchmarks for isolating the role of online aggregation. Across the reported forecasting configurations, PID-Expert aggregation produces narrower mean intervals than the prespecified PID benchmark while maintaining empirical coverage close to the target. The comparison with selection and averaging shows that the gains are not explained simply by choosing a favorable expert or averaging expert outputs, while rolling-window analyses further characterize the associated local coverage--efficiency tradeoffs.

The remainder of the paper is organized as follows. \Cref{sec:literature} reviews the relevant literature and introduces the online prediction setup and conformal PID calibration. \Cref{sec:method} presents PID-Expert aggregation. \Cref{sec:parameter} develops the regret and coverage analysis. \Cref{sec:experiments} evaluates the method through simulations and real-data applications. \Cref{sec:conclusion} discusses the findings and concludes. Proofs and additional implementation details are provided in the appendices.

\section{Background and Problem Setup}
\label{sec:literature}

This section places the calibration problem in the time-series literature, defines the online prediction setting, and introduces the PID states used by the expert library.

\subsection{Conformal prediction for time series}
\label{sec:time-series-background}

Conformal prediction constructs prediction sets by calibrating nonconformity scores, which measure disagreement between predictions and observations \citep{vovk2005,shafer2008}. Split conformal prediction separates model fitting from calibration and selects a score quantile from a calibration sample, obtaining finite-sample marginal coverage under exchangeability \citep{papadopoulos2002,lei2018,angelopoulos2023gentle}.

Temporal dependence and structural change motivate approaches beyond exchangeable calibration. In the forecasting literature, \citet{KOO2015166} study structural-break models under misspecification and their implications for forecasting, while \citet{Koo17022026} distinguish changes in factor variances from changes in loadings in macroeconomic factor models. In conformal prediction, weighted calibration addresses covariate shift and other departures from exchangeability \citep{tibshirani2019,barber2023}, while permutation-based methods account for dependent data \citep{chernozhukov2018}. \citet{xu2021} combine bootstrap ensembles with sequential residual calibration, and \citet{xu2023} use temporal dependence in scores to estimate conditional score quantiles. Their guarantees depend on the respective assumptions about the score process and estimation accuracy.

Other extensions broaden the prediction target. Conformal methods have been developed for multiple forecast horizons \citep{stankeviciute2021,wang2024cpts}, multistep multivariate prediction \citep{schlembach2025}, and functional time series with applications to age-specific mortality \citep{shang2026conformal}. These studies extend calibration to prediction tasks with different temporal and response structures.

Online feedback provides a further approach: ACI uses realized coverage errors to update subsequent calibration levels \citep{gibbs2021}. Building on ACI, \citet{zaffran2022} aggregate intervals from predictors with different step sizes, while \citet{gibbs2024} dynamically adapt the choice of step size. Such updates can also exploit predictable structure. \citet{sun2025} combine latent-state prediction with adaptive calibration to address change points. \citet{areces2025optimization} formulate online conformal prediction through online optimization, and \citet{hu2026distribution} incorporate estimated score distributions into optimistic threshold updates. We study one-step symmetric intervals, holding the point forecasts and score information fixed to compare the efficiency of different PID calibration paths.

\subsection{Online prediction setup}
\label{sec:online-setup}

Consider responses $y_t\in\mathbb R$ observed sequentially, with any covariates available at the forecast origin. Before observing $y_t$, we construct a one-step-ahead point forecast $\widehat y_{t|t-1}$ and choose a nonnegative threshold $q_t$. For a fixed target miscoverage rate $0<\alpha<1$, the reported interval is
\begin{equation*}
    \widehat C_{t,1-\alpha}
    =[\widehat y_{t|t-1}-q_t,\widehat y_{t|t-1}+q_t].
\end{equation*}
After observing the response, we obtain the nonconformity score and miscoverage indicator
\begin{equation*}
    s_t=|y_t-\widehat y_{t|t-1}|,
    \qquad
    err_t=\ind\{y_t\notin\widehat C_{t,1-\alpha}\}
           =\ind\{s_t>q_t\}.
\end{equation*}
The interval width is $2q_t$. Increasing the threshold can reduce miscoverage at the cost of width, so interval efficiency must be assessed alongside coverage.

We allow temporal dependence and distributional change and study calibration through sequential feedback. The objective is to keep the time-averaged miscoverage $T^{-1}\sum_{t=1}^T err_t$ near $\alpha$ while avoiding unnecessary width. This criterion differs from coverage at each forecast origin. The theory in \Cref{sec:parameter} gives upper control of time-averaged miscoverage under its stated conditions.

For the quantiles used below, write
\begin{equation*}
    Q_\tau(P)=\inf\{r\in\mathbb R:P((-\infty,r])\ge\tau\},
\end{equation*}
where $P$ is a probability measure, and let $\delta_s$ denote a unit point mass at $s$.

\subsection{Conformal PID calibration}
\label{sec:pid-calibration}

Conformal PID updates the threshold directly on the score scale \citep{angelopoulos2023pid,wang2024cpts}. Its components respond to recent miscoverage, accumulated coverage deviations, and predictable changes in future scores. These components make a controller's configuration relevant to both its coverage response and its interval width.

\Needspace{5\baselineskip}
After observing the score at time $t$, the controller updates the nonnegative threshold as
\begin{equation}
    q_{t+1}
    =
    \left[
        q_{t+1}^P
        +
        K_I
        \tan\!\left(
            \frac{I_t\log t}
                 {tC_{\mathrm{sat}}}
        \right)
        +
        \widehat s_{t+1\mid t}
    \right]_{+},
    \label{eq:pid-update}
\end{equation}
where $[z]_+=\max\{z,0\}$.

The first component is the proportional, or quantile-tracking, state
\begin{equation*}
    q_{t+1}^P
    =
    q_t^P+\eta R_t(err_t-\alpha),
\end{equation*}
where $\eta>0$ is the proportional learning rate that controls the responsiveness to the latest coverage error. For $t\ge2$, we define the recent score scale as
\begin{equation*}
    R_t
    =
    \max_{r\in\mathcal W_t^P}s_r
    -
    \min_{r\in\mathcal W_t^P}s_r,
    \qquad
    \mathcal W_t^P
    =
    \{\max(1,t-L_P),\ldots,t-1\}.
\end{equation*}
This range is based on a lookback window of length $L_P$. In our implementation, we set $R_1=1$ when no previous score is available.

The integral component responds to accumulated deviations from the target miscoverage rate through
\begin{equation*}
    I_t
    =
    \sum_{r=1}^{t}(err_r-\alpha).
\end{equation*}
Positive values of $I_t$ indicate cumulative undercoverage, whereas negative values indicate cumulative overcoverage. The integral gain $K_I\ge0$ determines the magnitude of the correction on the score scale, while the saturation constant $C_{\mathrm{sat}}>0$ governs how rapidly the correction steepens as the cumulative error grows. Within the unsaturated range, smaller values produce stronger corrections for the same error at a given time. In our implementation, the tangent argument is clipped inside $(-\pi/2,\pi/2)$ to keep the update finite.

The final component, $\widehat s_{t+1\mid t}$, forecasts the next nonconformity score using information available through round $t$. This scorecaster provides the anticipatory component of the controller by exploiting predictable temporal structure in the score sequence beyond that captured by coverage feedback alone.

Together, the three components allow the controller to respond to recent coverage errors, correct persistent deviations from the target, and anticipate changes in the scale of forecast uncertainty. Holding the scorecaster and lookback rule fixed, its behavior depends on the controller configuration
\begin{equation*}
    \theta=(\eta,K_I,C_{\mathrm{sat}}).
\end{equation*}
Different controller configurations can generate markedly different adjustment dynamics following a regime change. As illustrated schematically in \Cref{schematic}, an aggressive configuration may restore coverage rapidly but induce oscillations in both coverage and interval width, whereas a slower configuration can exhibit prolonged undercoverage as interval width adjusts only gradually. A more conservative configuration may maintain coverage above the target at the cost of persistently wider intervals. By contrast, a balanced configuration can deliver a more stable adjustment in both coverage and interval width. These contrasting responses motivate dynamically combining heterogeneous PID controllers rather than relying on a single prespecified configuration.

\begin{figure}[!htbp]
\centering
\includegraphics[width=0.6\textwidth]{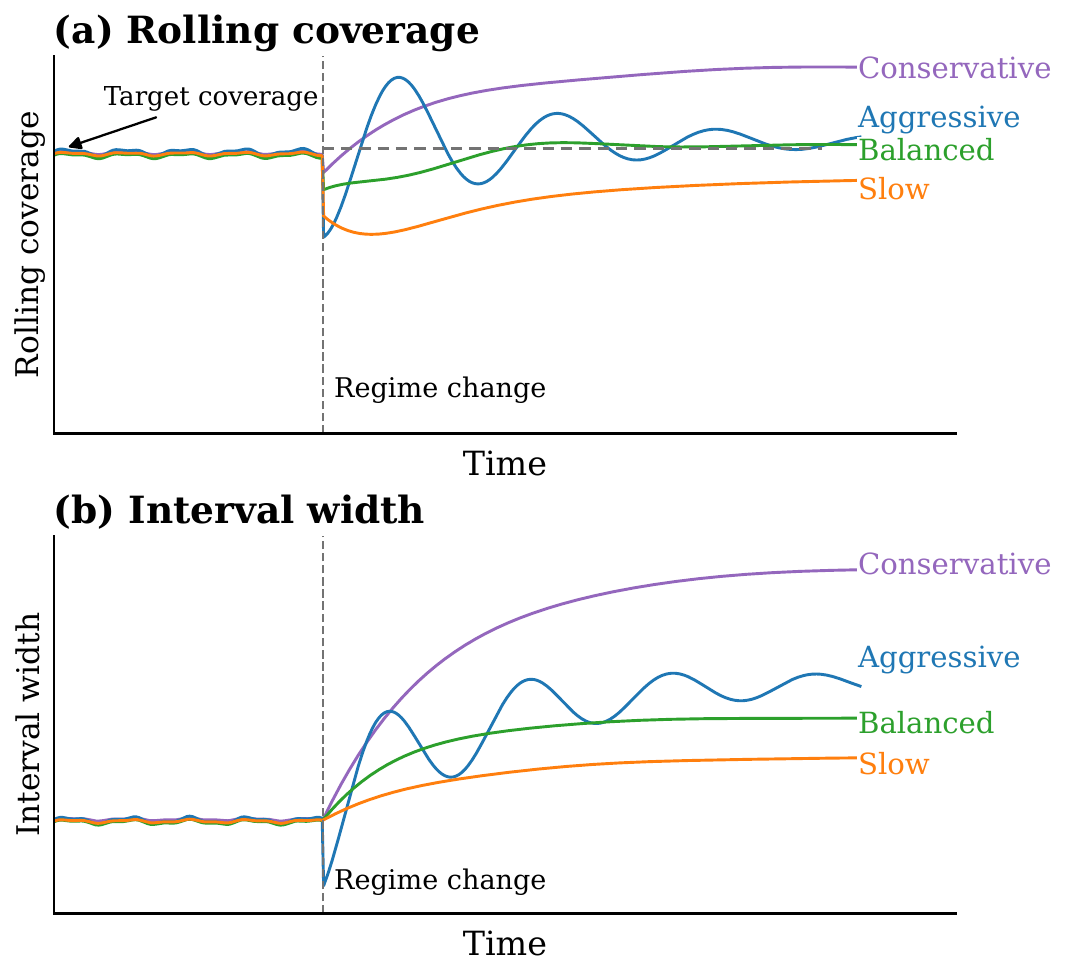}
\caption{Schematic illustration of coverage and interval width dynamics under different PID controller configurations following a regime change.}
\label{schematic}
\end{figure}
\FloatBarrier
\section{PID-Expert aggregation}
\label{sec:method}

\subsection{Overview}

Our objective is to achieve a favorable coverage--efficiency tradeoff under evolving data distributions by constructing narrow prediction intervals while maintaining coverage close to the target level. To this end, we propose \emph{PID-Expert aggregation}, an online method that adaptively combines candidate thresholds from a fixed library of PID controllers with heterogeneous parameter configurations. All experts share the same point forecasts and sequence of nonconformity scores, but each evolves according to its own history of coverage feedback. The aggregation mechanism is designed to account jointly for coverage performance and interval efficiency when adapting the relative influence of the experts.

\begin{figure}[!htbp]
    \centering
    \includegraphics[width=\textwidth]{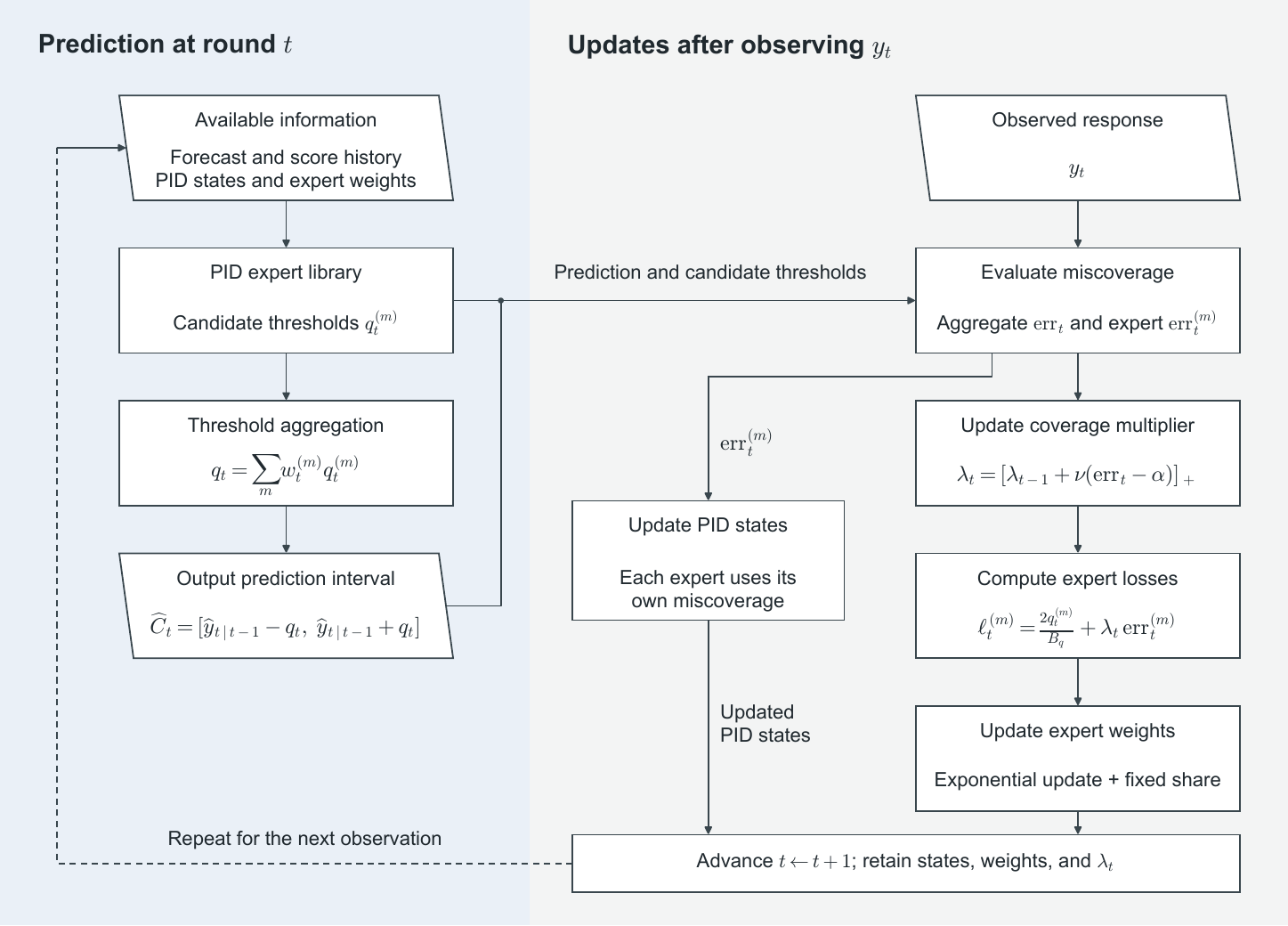}
    \caption{Framework of PID-Expert aggregation for sequential forecasting.}
    \label{fig:framework}
\end{figure}

As illustrated in \Cref{fig:framework}, the method proceeds recursively through prediction and updating based on feedback. At each round, the candidate PID thresholds are combined using the current expert weights to form the reported prediction interval. Once the response is observed, aggregate coverage feedback adjusts the emphasis placed on miscoverage, while each expert's realized coverage and interval width inform the subsequent weight update. In parallel, each expert updates its own PID state using its individual coverage feedback. The resulting states and weights are then carried forward to the next round, allowing the ensemble to adapt its coverage--efficiency balance over time while preserving the distinct calibration dynamics of the individual experts.

\subsection{Expert library}

We construct a fixed library of PID controllers with heterogeneous configurations to provide a range of threshold responses to changing distributions of nonconformity scores. Expert $m$ is specified by the parameter vector
\begin{equation*}
    \theta_m=(\eta_m,K_{I,m},C_{\mathrm{sat},m}),
    \qquad m=1,\ldots,M.
\end{equation*}
All experts share the same point forecast, sequence of nonconformity scores, predictable score range $R_t$, and scorecaster contribution. Each expert maintains its own proportional state $q_t^{P,(m)}$, cumulative coverage error $I_t^{(m)}$, and candidate threshold $q_t^{(m)}$.

After observing the response, we compute each expert's miscoverage indicator
\begin{equation*}
    err_t^{(m)}=\ind\{s_t>q_t^{(m)}\}.
\end{equation*}
Each controller follows the PID formulation in \Cref{eq:pid-update}, using its own parameters and miscoverage feedback. Specifically, its threshold is updated as
\begin{equation*}
    q_{t+1}^{(m)}
    =\left[
      q_{t+1}^{P,(m)}
      +K_{I,m}\tan\!\left\{
        \frac{I_t^{(m)}\log t}{tC_{\mathrm{sat},m}}
      \right\}
      +\widehat s_{t+1\mid t}
    \right]_+.
\end{equation*}
Thus, the experts retain distinct threshold trajectories driven by their own coverage histories. We aggregate these trajectories rather than average the PID parameters and run a single controller. Averaging parameters would collapse the separate coverage histories and would generally produce a different threshold path, since each expert's miscoverage indicator depends on its own threshold. The aggregation weights leave the individual PID updates unchanged.

\subsection{Coverage--efficiency loss and online aggregation}

Coverage alone cannot distinguish an efficient interval from an unnecessarily wide interval that also covers the response. We therefore compare experts using both width and miscoverage. After observing the response at round $t$, expert $m$ receives the loss
\begin{equation}
    \loss_t^{(m)}
    =\frac{2q_t^{(m)}}{B_q}+\lambda_t err_t^{(m)}.
    \label{eq:binary-meta-loss}
\end{equation}
The width term favors narrower intervals, while the miscoverage term penalizes miscoverage according to the current multiplier $\lambda_t$. We use an initial window of $L_0$ nonconformity scores during an initial period before online evaluation and set the nominal normalization scale to
\begin{equation}
    B_q=2Q_{1-\alpha}\!\left(\frac{1}{L_0}\sum_{t=1}^{L_0}\delta_{s_t}\right),
    \label{eq:bq-scale}
\end{equation}
with a positive lower bound applied in the implementation. The resulting $B_q>0$ is held fixed throughout online evaluation. Thus, expert comparisons account for both interval width and realized miscoverage, so a covered expert may still have a larger total loss if its interval is sufficiently wide.

An expert's coverage feedback controls its own threshold path, but the reported interval uses an average of thresholds. Its miscoverage indicator generally differs from the weighted average of expert miscoverage indicators. To feed this final outcome into expert evaluation, the coverage multiplier $\lambda_t$ is updated using aggregate miscoverage $err_t$:
\begin{equation}
    \lambda_{t}
    =\left[\lambda_{t-1}+\nu(err_{t}-\alpha)\right]_+,
    \qquad \lambda_0=0,
    \label{eq:lambda-update-main}
\end{equation}
where the multiplier step size $\nu>0$ controls the response to aggregate coverage feedback. A miss increases the multiplier, whereas a covered response decreases it toward zero. This adjustment strengthens the miscoverage penalty when coverage falls short of the target and relaxes it when coverage exceeds the target. In the following experiments, we set $\nu=0.02$. The updated $\lambda_t$ enters the current expert losses, which in turn determine the weights for the next round.

Given the losses in \Cref{eq:binary-meta-loss}, we first apply exponential reweighting:
\begin{equation}
    \widetilde w_{t+1}^{(m)}
    =
    \frac{w_t^{(m)}\exp\{-\xi\loss_t^{(m)}\}}
         {\sum_{j=1}^M w_t^{(j)}\exp\{-\xi\loss_t^{(j)}\}}.
    \label{eq:exponential-weight-update}
\end{equation}
Here, $\widetilde w_{t+1}^{(m)}$ is the intermediate normalized weight, and the learning rate $\xi>0$ controls sensitivity to loss differences. The exponential update increases the relative weight of experts with lower loss while preserving the information contained in the current weight distribution.

A pure exponential update may give an expert so little weight that it recovers slowly when its subsequent losses improve. To keep early losses from effectively removing a configuration from future comparisons, we apply a fixed-share step \citep{herbster1998tracking}:
\begin{equation}
    w_{t+1}^{(m)}
    =(1-\sigma)\widetilde w_{t+1}^{(m)}+\frac{\sigma}{M},
    \label{eq:fixed-share-weight-update}
\end{equation}
where $0\le\sigma\le1$ is the share parameter. This update retains a fraction $1-\sigma$ of the allocation determined by the losses while redistributing the remaining mass uniformly across the expert library. For $\sigma>0$, every expert keeps weight at least $\sigma/M$ and remains available for later comparisons.

The resulting weights are then used to aggregate the experts' candidate thresholds at the next time step,
\begin{equation*}
    q_{t+1}=\sum_{m=1}^M w_{t+1}^{(m)}q_{t+1}^{(m)},
\end{equation*}
yielding the prediction interval $[\widehat y_{t+1|t}-q_{t+1},\widehat y_{t+1|t}+q_{t+1}]$.

\Cref{alg:pid-expert} summarizes the procedure. Conditional on the point forecast and shared score information, the computational and storage costs at each round are both $O(M)$, assuming that the controller states have fixed dimension. The loss and PID updates for individual experts can be performed in parallel.

\begin{algorithm}[h]
\caption{PID-Expert aggregation}
\label{alg:pid-expert}
\small
\begin{algorithmic}[1]
\Require Target miscoverage level $\alpha$; PID library $\{\theta_m\}_{m=1}^M$; normalization scale $B_q>0$; learning rate $\xi>0$;
share parameter $0\leq\sigma\leq 1$; multiplier step size $\nu>0$.
\State Initialize all PID states, $w_1^{(m)}\gets1/M$, for $m=1,\ldots,M$, and $\lambda_0\gets0$.
\For{$t=1,\ldots,T$}
    \State Obtain the shared forecast $\widehat y_{t|t-1}$ and candidate thresholds $\{q_t^{(m)}\}_{m=1}^M$.
    \State Set $q_t\gets\sum_m w_t^{(m)}q_t^{(m)}$ and output $[\widehat y_{t|t-1}-q_t,\widehat y_{t|t-1}+q_t]$.
    \State Observe $y_t$ and compute $err_t$ and $\{err_t^{(m)}\}_{m=1}^M$.
    \State Update $\lambda_{t} =\left[\lambda_{t-1}+\nu(err_{t} -\alpha)\right]_+$.
    \State Compute the expert loss $\loss_t^{(m)}=2q_t^{(m)}/B_q+\lambda_{t}err_t^{(m)},\quad m=1,\ldots,M$.
    \State Update the intermediate weights $\widetilde w_{t+1}^{(m)} = \frac{w_t^{(m)}\exp\{-\xi\loss_t^{(m)}\}}{\sum_{j=1}^M w_t^{(j)}\exp\{-\xi\loss_t^{(j)}\}},\quad m=1,\ldots,M$.
    \State Apply the fixed-share update $w_{t+1}^{(m)} =(1-\sigma)\widetilde w_{t+1}^{(m)}+\frac{\sigma}{M},\quad m=1,\ldots,M$.
    \State Update each expert's PID states using its own coverage feedback $err_t^{(m)}$.
\EndFor
\end{algorithmic}
\end{algorithm}

\section{Aggregation properties and coverage analysis}
\label{sec:parameter}

The aggregation rule raises two questions: how the expert weights compare with a fixed configuration, and how the reported interval controls miscoverage. These questions require separate arguments because the weighted average of expert miscoverage indicators generally differs from the miscoverage indicator of the aggregate threshold. We first study the comparison made by the expert weighting layer, then analyze the aggregate coverage feedback and state what the two results imply for interval efficiency.

\subsection{Comparison properties of the expert weights}

Fix an interval $\mathcal I=[r,s]$ of length $L=s-r+1$. For the expert losses in \Cref{eq:binary-meta-loss}, define the weighted first and second moments
\begin{equation*}
    \bar\loss_t
    :=\sum_{m=1}^M w_t^{(m)}\loss_t^{(m)},
    \qquad
    \bar v_t
    :=\sum_{m=1}^M w_t^{(m)}\bigl(\loss_t^{(m)}\bigr)^2.
\end{equation*}
The local regret relative to a fixed expert $m$ is
\begin{equation*}
    \operatorname{Reg}_\mathcal{I}(m)
    :=\sum_{t=r}^s\bar\loss_t
      -\sum_{t=r}^s\loss_t^{(m)}.
\end{equation*}

The comparison uses the same realized sequence of coverage multipliers $\{\lambda_t\}$ for every expert. Thus, it measures how the weighting layer performs relative to a fixed expert under the losses generated by the actual aggregation procedure.

\begin{assumption}[Well-defined expert feedback]
\label{ass:finite-meta-loss}
The expert library is finite, $M<\infty$, and the losses satisfy $0\le\loss_t^{(m)}<\infty$ for all $t$ and $m$. The aggregation parameters are fixed over time with $\xi>0$ and $0<\sigma\le1/2$. The initial weights are uniform, $w_1^{(m)}=1/M$, and subsequent weights are updated according to \Cref{eq:exponential-weight-update,eq:fixed-share-weight-update}.
\end{assumption}

The restriction on $\sigma$ is used for the regret analysis, while the weight update itself is defined for $0\le\sigma\le1$. Positive sharing ensures the uniform lower bound $w_t^{(m)}\ge \frac{\sigma}{M}$, while $\sigma\le1/2$ allows the simplification $-\log(1-\sigma)\le2\sigma$.

\begin{theorem}[Local fixed-expert regret]
\label{thm:local-fixed-expert-regret}
Under \Cref{ass:finite-meta-loss}, for any interval $\mathcal I=[r,s]$ of length $L=s-r+1$ and any expert $m$,
\begin{align}
    \operatorname{Reg}_\mathcal I(m)
    &\le
    \xi\sum_{t=r}^s\bar v_t
    +\frac{\log(M/\sigma)-L\log(1-\sigma)}{\xi}
    \label{eq:local-regret-exact}\\
    &\le
    \xi\sum_{t=r}^s\bar v_t
    +\frac{\log(M/\sigma)+2\sigma L}{\xi}.
    \label{eq:local-regret-simplified}
\end{align}
\end{theorem}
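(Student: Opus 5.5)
The plan is a standard fixed-share potential argument, localized to the interval $\mathcal I$ by tracking the log-weight of the comparator expert $m$. We need two one-step inequalities. The first bounds the weighted loss $\bar\loss_t$ through the mixture normalizer $Z_t:=\sum_j w_t^{(j)}\exp\{-\xi\loss_t^{(j)}\}$. The second lower-bounds the growth of $\log w_{t+1}^{(m)}$ in terms of $\loss_t^{(m)}$ and $\log Z_t$. Summing over $t=r,\dots,s$ and telescoping in $\log w_t^{(m)}$ then gives the result. The telescoping requires only that the starting weight $w_r^{(m)}$ be bounded below and that $w_{s+1}^{(m)}\le1$.

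\emph{Step 1: loss side.} Since $\loss_t^{(j)}\ge0$ by \Cref{ass:finite-meta-loss}, we use $e^{-x}\le1-x+x^2$ for $x\ge0$ (indeed $e^{-x}\le 1-x+x^2/2$). Together with $\log(1+u)\le u$, this gives
\[
\log Z_t\le \log\Bigl(1-\xi\bar\loss_t+\xi^2\bar v_t\Bigr)\le -\xi\bar\loss_t+\xi^2\bar v_t .
\]
The middle expression is positive because it dominates $Z_t>0$. Rearranging yields $\bar\loss_t\le -\xi^{-1}\log Z_t+\xi\bar v_t$. No upper bound on the losses is needed here; this is exactly why the bound is stated in terms of the second moment $\bar v_t$ rather than a range constant.

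\emph{Step 2: comparator weight.} By \Cref{eq:fixed-share-weight-update}, $w_{t+1}^{(m)}\ge(1-\sigma)\widetilde w_{t+1}^{(m)}$. Hence, by \Cref{eq:exponential-weight-update},
\[
\log w_{t+1}^{(m)}\ge \log(1-\sigma)+\log w_t^{(m)}-\xi\loss_t^{(m)}-\log Z_t .
\]
Summing over $t\in\mathcal I$ gives
\[
-\sum_{t=r}^s\log Z_t\le \sum_{t=r}^s \xi\loss_t^{(m)}+\log\frac{w_{s+1}^{(m)}}{w_r^{(m)}}-L\log(1-\sigma).
\]
For the endpoint terms we use $w_{s+1}^{(m)}\le1$. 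For the starting weight we use $w_r^{(m)}\ge\sigma/M$: when $r\ge2$ this follows from the fixed-share step, and when $r=1$ we have $w_1^{(m)}=1/M\ge\sigma/M$. Thus $\log(w_{s+1}^{(m)}/w_r^{(m)})\le\log(M/\sigma)$.

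\emph{Step 3: combine.} Summing Step 1 over $\mathcal I$, inserting Step 2, and dividing by $\xi$ yields \Cref{eq:local-regret-exact}. For \Cref{eq:local-regret-simplified} we use $-\log(1-\sigma)\le 2\sigma$ for $0<\sigma\le1/2$. This holds because the function $-\log(1-\sigma)-2\sigma$ vanishes at $\sigma=0$ and has derivative $1/(1-\sigma)-2\le0$ on $[0,1/2]$.

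The main obstacle I expect is the localization. One must avoid telescoping a global potential such as $\log\sum_j w_t^{(j)}$, which equals zero and gives no local information. The argument should instead track the single coordinate $\log w_t^{(m)}$ and rely on the uniform floor $\sigma/M$ to restart at an arbitrary $r$. A secondary point is that the losses are unbounded because $\lambda_t$ can grow. The step $e^{-x}\le1-x+x^2$ must therefore be applied in a form valid for all $x\ge0$, and the bound keeps the data-dependent $\bar v_t$ rather than invoking a Hoeffding-type range bound.
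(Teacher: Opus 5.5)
Your proposal is correct and matches the paper's proof essentially step for step. Step 1 is the paper's one-step bound $\log Z_t\le-\xi\bar\loss_t+\xi^2\bar v_t$ via $e^{-x}\le1-x+x^2$. Step 2 is its telescoped comparator inequality from $w_{t+1}^{(m)}\ge(1-\sigma)\widetilde w_{t+1}^{(m)}$, closed with the floor $w_r^{(m)}\ge\sigma/M$ and $w_{s+1}^{(m)}\le1$. Step 3 combines the two with $-\log(1-\sigma)\le2\sigma$; your explicit derivative check of this last inequality is a small detail the paper leaves implicit.
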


The result follows from a fixed-share comparison argument \citep{herbster1998tracking} and holds pathwise, even when the expert losses depend on aggregate coverage feedback through $\lambda_t$. It therefore requires no assumptions of independence, stationarity, or exchangeability. The bound depends on the accumulated second moments of the expert losses, so finiteness of the individual losses alone is not sufficient to guarantee sublinear regret.

The proof is given in \Cref{app:regret-parameter-proofs}. An extension to an external benchmark sequence, conditional on a library approximation assumption, is stated and proved in \Cref{app:target-sequence-comparison}.

\paragraph{Application to aggregation parameters}

We use the simplified fixed-expert bound in \Cref{eq:local-regret-simplified} to derive practical reference scales for the aggregation parameters. We begin by fixing a local time scale $L\ge1$ and optimizing the $\sigma$-dependent complexity term.

\begin{corollary}[Fixed-share scale]
\label{cor:share-rule}
For a prescribed local interval length $L$, the value minimizing the $\sigma$-dependent complexity term in \Cref{eq:local-regret-simplified} is
\begin{equation*}
    \sigma^\star=\frac{1}{2L}.
\end{equation*}
\end{corollary}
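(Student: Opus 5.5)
The plan is to isolate the $\sigma$-dependent part of the simplified bound \Cref{eq:local-regret-simplified} and minimize it as a one-variable function. With $\xi$, $M$ and $L$ held fixed, the term $\xi\sum_{t=r}^s\bar v_t$ does not involve $\sigma$ as a free variable in the bound. This is the interpretation the corollary adopts: the second moments depend on the realized weights, and hence implicitly on $\sigma$, but the corollary only concerns the explicit complexity term. The positive factor $1/\xi$ and the additive constant $\log M$ do not affect the minimizer. The problem therefore reduces to minimizing
\begin{equation*}
    g(\sigma)=-\log\sigma+2\sigma L,\qquad 0<\sigma\le 1/2,
\end{equation*}
which comes from $\log(M/\sigma)+2\sigma L=\log M+g(\sigma)$.

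Next I would show that $g$ is strictly convex on $(0,\infty)$. This follows from $g''(\sigma)=1/\sigma^2>0$. Its derivative is $g'(\sigma)=-1/\sigma+2L$, which vanishes only at $\sigma=1/(2L)$. Since $g(\sigma)\to\infty$ as $\sigma\downarrow0$, this stationary point is the unique global minimizer on $(0,\infty)$.

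Finally I would check admissibility under \Cref{ass:finite-meta-loss}. Because $L\ge1$, we get $\sigma^\star=1/(2L)\le1/2$, so $\sigma^\star$ lies in the allowed range $(0,1/2]$. It is therefore also the minimizer over the constrained set on which the simplification $-\log(1-\sigma)\le2\sigma$ used in \Cref{eq:local-regret-simplified} is valid. As a remark, substituting $\sigma^\star$ gives the complexity term $\{\log(2LM)+1\}/\xi$.

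The only delicate point, rather than a real obstacle, is stating clearly that the optimization concerns the explicit complexity term and not the full bound.
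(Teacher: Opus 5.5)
Your proposal is correct and follows the paper's proof: isolate $g(\sigma)=\log(M/\sigma)+2\sigma L$, use $g''(\sigma)=1/\sigma^2>0$ and $g'(1/(2L))=0$, and check that $L\ge1$ places $\sigma^\star$ in $(0,1/2]$. Your remarks on the implicit $\sigma$-dependence of $\bar v_t$ and on $g(\sigma)\to\infty$ as $\sigma\downarrow0$ are harmless clarifications that the argument does not need.
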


Next, define
\begin{equation*}
    S_\mathcal I:=\sum_{t=r}^s\bar v_t,
    \qquad
    V_\mathcal I:=\frac{S_\mathcal I}{L}.
\end{equation*}
Substituting $\sigma=1/(2L)$ into \Cref{eq:local-regret-simplified} leaves the $\xi$-dependent terms $\xi S_{\mathcal I} + \frac{\log(2ML)+1}{\xi}$. For a reference calculation that holds $S_{\mathcal I}>0$ fixed, minimizing this expression over $\xi>0$ gives
\begin{equation}
    \xi_\mathcal I^\star
    =\sqrt{\frac{\log(2ML)+1}{S_\mathcal I}}
    =\sqrt{\frac{\log(2ML)+1}{LV_\mathcal I}}.
    \label{eq:xi-realized-optimum}
\end{equation}

Since $V_\mathcal I$ is not available when the parameters are chosen, we use a reference scale $V>0$ and set
\begin{equation}
    \boxed{
    \sigma=\frac{1}{2L},
    \qquad
    \xi=\sqrt{\frac{\log(2ML)+1}{LV}}
    }
    \label{eq:aggregation-rules}
\end{equation}
throughout each run.

The choice of $L$ sets the local time scale for these reference parameters; the bound itself holds over arbitrary intervals without reoptimizing them. The scale $V$ is fixed before evaluation using the reference calculation in \Cref{app:reference-v-derivation}. The proof of \Cref{cor:share-rule} and the learning-rate calculation are included in \Cref{app:regret-parameter-proofs}.

\subsection{Coverage feedback for the reported interval}
\label{sec:coverage}

The multiplier recursion in \Cref{eq:lambda-update-main} yields the following pathwise relation between average aggregate miscoverage and multiplier growth.

\begin{proposition}[Pathwise upper bound on aggregate miscoverage]
\label{prop:lambda-certificate}
For every $T\ge1$,
\begin{equation}
    \frac{1}{T}\sum_{t=1}^T err_t
    \le
    \alpha+
    \frac{\lambda_T-\lambda_0}{\nu T}.
    \label{eq:lambda-bound}
\end{equation}
Consequently, along any realized path for which $\lambda_T=o(T)$,
\begin{equation}
    \limsup_{T\to\infty}
    \frac{1}{T}\sum_{t=1}^T err_t
    \le\alpha.
    \label{eq:lambda-asymptotic-upper}
\end{equation}
\end{proposition}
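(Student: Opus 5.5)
The argument is a telescoping bound on the multiplier recursion in \Cref{eq:lambda-update-main}. The positive-part projection can only increase its argument, because $[z]_+\ge z$ for every real $z$. Hence for each $t\ge1$,
\begin{equation*}
    \lambda_t=\left[\lambda_{t-1}+\nu(err_t-\alpha)\right]_+\ \ge\ \lambda_{t-1}+\nu(err_t-\alpha),
\end{equation*}
which we rearrange as $\nu(err_t-\alpha)\le\lambda_t-\lambda_{t-1}$. This step is deterministic. It uses nothing about how $err_t$ is generated, so it holds along every realized path, whatever the dependence between $err_t$, the expert weights and the PID states.

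Next we sum this inequality over $t=1,\ldots,T$. The right-hand side telescopes to $\lambda_T-\lambda_0$, which gives $\nu\sum_{t=1}^T(err_t-\alpha)\le\lambda_T-\lambda_0$. Dividing by $\nu T>0$ and moving $\alpha$ to the right-hand side yields \Cref{eq:lambda-bound} exactly. Since $\lambda_0=0$, the bound can also be written as $\alpha+\lambda_T/(\nu T)$.

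For the asymptotic statement \Cref{eq:lambda-asymptotic-upper}, fix a path with $\lambda_T=o(T)$. Because $\nu>0$ is fixed and $\lambda_0=0$, the remainder $(\lambda_T-\lambda_0)/(\nu T)$ tends to $0$. Taking $\limsup_{T\to\infty}$ on both sides of \Cref{eq:lambda-bound} then gives the claim.

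There is no real obstacle here. The only point needing care is the direction of the inequality from the projection: truncation at zero only raises $\lambda_t$, so it can loosen the bound but never reverse it. This is also why the result is only an upper control. Overcoverage periods, during which $\lambda_t$ sits at zero, are not recorded by the multiplier, so no matching lower bound on miscoverage follows from this argument.
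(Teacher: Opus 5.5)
Your proposal is correct and matches the paper's proof essentially step for step. The inequality $[z]_+\ge z$ gives $\nu(err_t-\alpha)\le\lambda_t-\lambda_{t-1}$, summing over $t=1,\ldots,T$ telescopes to \Cref{eq:lambda-bound}, and the $\limsup$ claim follows because $(\lambda_T-\lambda_0)/(\nu T)\to0$ when $\lambda_T=o(T)$ (your use of $\lambda_0=0$ is harmless, since any fixed finite $\lambda_0\ge0$ also gives a vanishing remainder).
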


The relation gives asymptotic control when the multiplier grows sublinearly. To obtain a sufficient condition for this behavior, we next impose negative drift once the multiplier is sufficiently large.

Let $\mathcal F_{t-1}$ denote the information available before the response at time $t$ is observed, so that $q_t$ and $\lambda_{t-1}$ are $\mathcal F_{t-1}$-measurable.

\begin{assumption}[High-multiplier negative drift]
\label{ass:high-multiplier-drift}
There exist constants $\nu\alpha\le\Lambda<\infty$ and $0<\delta\le\alpha$ such that, almost surely for every $t\ge1$,
\begin{equation}
    \E\!\left[err_t-\alpha\mid\mathcal F_{t-1}\right]
    \le -\delta
    \qquad\text{whenever }\lambda_{t-1}\ge\Lambda.
    \label{eq:high-multiplier-drift}
\end{equation}
Moreover, the multiplier evolves according to the recursion in \Cref{eq:lambda-update-main} from a fixed finite initial value $\lambda_0\ge0$. The algorithm uses $\lambda_0=0$.
\end{assumption}

This assumption requires the conditional aggregate miscoverage probability to be at most $\alpha-\delta$ whenever $\lambda_{t-1}\ge\Lambda$. Since $\Lambda\ge\nu\alpha$, the projection is inactive in this region, which implies $\E[\lambda_t-\lambda_{t-1}\mid\mathcal F_{t-1}]\le-\nu\delta$. The restriction on $\Lambda$ entails no loss of generality, since any finite threshold above which the condition holds can be increased to at least $\nu\alpha$. A larger multiplier places greater emphasis on miscoverage in the expert losses and thereby influences subsequent weights. This negative-drift property is an additional assumption on the combined expert library, controller dynamics, and data-generating process.

\begin{proposition}[Mean-square sublinearity of the multiplier]
\label{prop:multiplier-mean-square}
Under \Cref{ass:high-multiplier-drift}, define
\begin{equation*}
    C_\lambda
    :=\nu^2+2\nu\Lambda(1-\alpha).
\end{equation*}
Then, for every $T\ge1$,
\begin{equation}
    \E[\lambda_T^2]
    \le \lambda_0^2+C_\lambda T.
    \label{eq:lambda-second-moment-bound}
\end{equation}
Consequently,
\begin{equation}
    \E\!\left[\left(\frac{\lambda_T}{T}\right)^2\right]
    \le \frac{\lambda_0^2}{T^2}+\frac{C_\lambda}{T}
    \longrightarrow0.
    \label{eq:lambda-l2-sublinear}
\end{equation}
\end{proposition}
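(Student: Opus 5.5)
The plan is a one-step drift bound on $\lambda_t^2$, followed by summation over $t$. Write $\Delta_t:=\nu(err_t-\alpha)$, so that $\lambda_t=[\lambda_{t-1}+\Delta_t]_+$ by \Cref{eq:lambda-update-main}. Since $0$ is feasible and $[\cdot]_+$ is nonexpansive, $\lambda_t^2\le(\lambda_{t-1}+\Delta_t)^2$. Expanding gives
\begin{equation*}
    \lambda_t^2\le\lambda_{t-1}^2+2\lambda_{t-1}\Delta_t+\Delta_t^2 .
\end{equation*}
Because $err_t\in\{0,1\}$, we have $|\Delta_t|\le\nu\max(\alpha,1-\alpha)\le\nu$, so $\Delta_t^2\le\nu^2$. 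The whole argument therefore reduces to bounding the cross term $\E[2\lambda_{t-1}\Delta_t\mid\mathcal F_{t-1}]$, using that $\lambda_{t-1}$ is $\mathcal F_{t-1}$-measurable.

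The cross term is handled by splitting on the event $\{\lambda_{t-1}\ge\Lambda\}$.

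On $\{\lambda_{t-1}\ge\Lambda\}$, \Cref{ass:high-multiplier-drift} gives $\E[\Delta_t\mid\mathcal F_{t-1}]\le-\nu\delta\le0$. Since $\lambda_{t-1}\ge0$, the cross term has conditional expectation at most $0$.

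On $\{\lambda_{t-1}<\Lambda\}$, use the crude pointwise bound $\Delta_t\le\nu(1-\alpha)$. Then $2\lambda_{t-1}\Delta_t\le2\Lambda\nu(1-\alpha)$. This step needs care when $\Delta_t<0$, but there the product is nonpositive, so the bound still holds.

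Combining the two cases,
\begin{equation*}
    \E[\lambda_t^2\mid\mathcal F_{t-1}]\le\lambda_{t-1}^2+\nu^2+2\nu\Lambda(1-\alpha)=\lambda_{t-1}^2+C_\lambda
\end{equation*}
almost surely.

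Next, take total expectations and telescope from $t=1$ to $T$, which yields \Cref{eq:lambda-second-moment-bound}. Integrability is not an issue: $\lambda_t\le\lambda_0+\nu t$ deterministically, so every $\lambda_t^2$ is bounded and all expectations are finite. Dividing by $T^2$ gives \Cref{eq:lambda-l2-sublinear}, and the right-hand side tends to $0$ because $\lambda_0$ is fixed and finite.

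The only delicate point is the case split. The negative drift is available only above $\Lambda$, so below $\Lambda$ we must absorb the positive drift via the bound $\lambda_{t-1}<\Lambda$. This is exactly what produces the $2\nu\Lambda(1-\alpha)$ term in $C_\lambda$. Note that the strict drift $\delta$ is not even needed for this mean-square statement; only its sign is used.
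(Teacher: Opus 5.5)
Your proposal is correct and follows the paper's proof essentially step for step. It uses the same ingredients: the bound $[x]_+^2\le x^2$, the expansion of the square with $(err_t-\alpha)^2\le1$, the case split on $\{\lambda_{t-1}\ge\Lambda\}$ that produces the $2\nu\Lambda(1-\alpha)$ term, and telescoping of the conditional bound. Your extra remarks are accurate refinements that the paper leaves implicit: integrability follows from $\lambda_t\le\lambda_0+\nu t$, and only the sign of the high-multiplier drift, not $\delta>0$, is used.
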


Thus, $\lambda_T/T\to0$ in $L^2$ and hence in probability. Together with the bounded increments of the multiplier recursion, the second-moment bound also yields $\lambda_T/T\to0$ almost surely; see \Cref{app:multiplier-stability-proof}.

Combining \Cref{prop:lambda-certificate,prop:multiplier-mean-square} yields a nonasymptotic bound for expected average miscoverage.

\begin{corollary}[Expected aggregate miscoverage]
\label{cor:expected-miscoverage}
Under \Cref{ass:high-multiplier-drift},
\begin{align}
    \E\!\left[\frac{1}{T}\sum_{t=1}^T err_t\right]
    &\le
    \alpha+\frac{\sqrt{\E[\lambda_T^2]}}{\nu T}
    \notag\\
    &\le
    \alpha+
    \frac{\sqrt{\lambda_0^2+C_\lambda T}}{\nu T}
    =\alpha+O(T^{-1/2}).
    \label{eq:expected-miscoverage-rate}
\end{align}
\end{corollary}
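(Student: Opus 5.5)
The corollary follows by taking expectations in the pathwise bound of \Cref{prop:lambda-certificate} and then controlling the multiplier with \Cref{prop:multiplier-mean-square}. First, \Cref{prop:lambda-certificate} holds on every realized path. It uses only the recursion in \Cref{eq:lambda-update-main}: since $\lambda_t\ge\lambda_{t-1}+\nu(err_t-\alpha)$, telescoping gives $\nu\sum_{t=1}^T(err_t-\alpha)\le\lambda_T-\lambda_0$. This recursion is part of \Cref{ass:high-multiplier-drift}, with a general initial value $\lambda_0\ge0$. Because $\lambda_0\ge0$, we may drop $-\lambda_0$ and obtain
\begin{equation*}
    \frac{1}{T}\sum_{t=1}^T err_t\le\alpha+\frac{\lambda_T}{\nu T}.
\end{equation*}
All quantities here are bounded for fixed $T$: $0\le\lambda_T\le\lambda_0+\nu T$ and $err_t\in\{0,1\}$. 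Taking expectations is therefore legitimate, and gives
\begin{equation*}
    \E\!\left[\frac{1}{T}\sum_{t=1}^T err_t\right]\le\alpha+\frac{\E[\lambda_T]}{\nu T}.
\end{equation*}

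Second, apply Jensen's inequality, or equivalently Cauchy--Schwarz with the constant $1$, to get $\E[\lambda_T]\le\sqrt{\E[\lambda_T^2]}$. This yields the first inequality in \Cref{eq:expected-miscoverage-rate}. Then insert the bound $\E[\lambda_T^2]\le\lambda_0^2+C_\lambda T$ from \Cref{prop:multiplier-mean-square}. The square root is monotone, so this gives the second inequality.

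Finally, for the rate, use $\sqrt{a+b}\le\sqrt a+\sqrt b$:
\begin{equation*}
    \frac{\sqrt{\lambda_0^2+C_\lambda T}}{\nu T}\le\frac{\lambda_0}{\nu T}+\frac{\sqrt{C_\lambda}}{\nu\sqrt T},
\end{equation*}
which is $O(T^{-1/2})$ because $\lambda_0$, $\nu$ and $C_\lambda$ are fixed constants.

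There is no serious obstacle, since the drift analysis is already contained in \Cref{prop:multiplier-mean-square}. The one point to watch is the sign of $\lambda_0$. The pathwise bound contains $(\lambda_T-\lambda_0)$, and replacing it by $\lambda_T$ needs $\lambda_0\ge0$, which \Cref{ass:high-multiplier-drift} guarantees. Alternatively, we could keep $\lambda_0$ and bound $\E[\lambda_T-\lambda_0]\le\E[\lambda_T]$ directly.
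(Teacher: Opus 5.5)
Your proposal is correct and matches the paper's proof step for step: take expectations in the pathwise bound of \Cref{prop:lambda-certificate}, drop $-\lambda_0$ using $\lambda_0\ge0$, apply Cauchy--Schwarz to get $\E[\lambda_T]\le\sqrt{\E[\lambda_T^2]}$, and substitute the second-moment bound from \Cref{prop:multiplier-mean-square}. Your remarks on integrability and the explicit $\sqrt{a+b}\le\sqrt a+\sqrt b$ step for the $O(T^{-1/2})$ rate are sound details that the paper leaves implicit.
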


Under the same assumption, the almost-sure sublinearity of the multiplier further implies that \Cref{eq:lambda-asymptotic-upper} holds almost surely. Proofs of the coverage results appear in \Cref{app:multiplier-proof,app:multiplier-stability-proof,app:expected-miscoverage-proof}.

\subsection{Role of the aggregation components}

The pathwise coverage result is obtained from the realized aggregate error after the reported threshold has been formed. The corresponding feedback pathway can be summarized as
\begin{equation*}
    (\{\theta_m\},B_q,\xi,\sigma,\nu)
    \longrightarrow
    \{q_t^{(m)},w_t^{(m)}\}_{m=1}^M
    \longrightarrow q_t
    \longrightarrow err_t
    \longrightarrow\lambda_t
    \longrightarrow\{\loss_t^{(m)},w_{t+1}^{(m)}\}_{m=1}^M.
\end{equation*}
The quantities $\{\theta_m\}$, $B_q$, $\xi$, and $\sigma$ do not appear explicitly in \Cref{eq:lambda-bound} because the pathwise bound follows directly from the scalar multiplier update. They nevertheless remain behaviorally important because, through the expert thresholds and weights, they affect $q_t$, the realized miscoverage sequence, and the evolution of $\lambda_t$. These components therefore influence whether \Cref{ass:high-multiplier-drift} holds and, when it does, the admissible values of the stability constants $\Lambda$ and $\delta$.

More generally, \Cref{prop:lambda-certificate} remains algebraically valid for any predictable rule that generates $q_t$ and subsequently applies the same multiplier update. The expected and almost-sure asymptotic guarantees additionally rely on the high-multiplier negative-drift condition in \Cref{ass:high-multiplier-drift}.

\section{Experiments}
\label{sec:experiments}

In this section, we evaluate the proposed PID-Expert aggregation method through simulations and two real-data applications. The following experiment choices are used throughout:

\begin{enumerate}[label=(\alph*),leftmargin=2.2em,itemsep=0.2em]

    \item All experiments use one-step-ahead forecasting with symmetric
    prediction intervals at the nominal coverage level $1-\alpha=0.9$.

    \item The benchmark methods are configured following \citet{angelopoulos2023pid} for Conformal PID and \citet{gibbs2024} for DtACI and AgACI. AgACI is implemented here by aggregating calibration levels; the original method aggregates interval endpoints separately \citep{zaffran2022}. Further implementation details are provided in \Cref{app:expert-pools,app:aci-benchmarks}. We also include two natural variants of PID-Expert as benchmarks, using the same expert threshold trajectories: PID-Expert selection adaptively uses the threshold from whichever expert receives the highest weight in PID-Expert aggregation at each time step; PID-Expert average takes an arithmetic mean of all expert thresholds. These comparisons change how the final threshold is formed; selection reuses the aggregation weights without a separate weight update.
    
    \item For PID-Expert aggregation, we set the multiplier step size to $\nu=0.02$ and fix the width normalization scale using the initial score window, as defined in \Cref{eq:bq-scale}. We use the prespecified reference value $V=1.3533$ in the parameter rule $\sigma=1/(2L)$ and $\xi=\sqrt{\left(\log(2ML)+1\right)/(LV)}$ from \Cref{eq:aggregation-rules}. The expert library is constructed from a parameter grid around the recommended Conformal PID configuration; the grid is specified in \Cref{app:expert-pools}.
    
    \item We assess coverage using empirical and rolling coverage and interval efficiency using mean and median width. We also report the mean scaled interval score \citep[MSIS,][]{gneiting2007-st}, which accounts jointly for interval width and the magnitude of coverage failures. For a symmetric interval with half-width $q_t$ and absolute forecast error $s_t$, the interval score \citep{winkler1972} is
\begin{equation*}
    \operatorname{IS}_t
    =2q_t+\frac{2}{\alpha}(s_t-q_t)_+.
\end{equation*}
MSIS is the average interval score over the evaluation period divided by the mean absolute error of a seasonal naive forecast for seasonal series, or a one-step naive forecast for nonseasonal series. Lower MSIS values indicate better interval forecasts under this scoring criterion.
    
\end{enumerate}

\subsection{Simulations}

In this subsection, we evaluate the proposed aggregation method using two simulation designs: a nonlinear autoregressive process with exogenous predictors and Gaussian sequences with stationary, gradually changing, or abruptly changing means.

\subsubsection{Nonlinear simulation}

The first experiment follows the nonlinear data-generating process of \citet{wang2024cpts},
\begin{equation*}
    y_t
    =\sin(y_{t-1})
    +0.5\log(y_{t-2}+1)
    +0.1y_{t-1}x_{1,t}
    +0.3x_{2,t}
    +\varepsilon_t,
\end{equation*}
where $x_{1,t},x_{2,t}\stackrel{\mathrm{iid}}{\sim}\operatorname{Unif}(0,1)$ and $\varepsilon_t$ is Gaussian white noise with standard deviation $0.1$. After an appropriate burn-in period, we generate 2,000 data points.

One-step-ahead point forecasts are produced by a feed-forward neural network with a single hidden layer, using lagged responses and the two exogenous predictors as inputs. Following the neural autoregression design of \citet{wang2024cpts}, the model is refitted at each forecast origin using a rolling training window of 500 observations, with a calibration window of the same length. Conformal PID and PID-Expert use the same Theta model scorecaster \citep{assimakopoulos2000}. The PID-Expert library contains $M=343$ controllers. We set the calibration window size $L_0=500$ used to give $B_q$ and $L=100$ as a local time scale, yielding $\sigma=0.005$ and $\xi=0.299$.

\begin{table}[!htbp]
    \centering
    \begin{threeparttable}
    \caption{Performance in the nonlinear simulation in terms of empirical coverage, mean and median interval width, minimum rolling coverage, and MSIS.}
    \label{tab:nonlinear-four-method}
    \small
    \begin{tabular*}{\textwidth}{@{\extracolsep{\fill}}lrrrrr}
\toprule
Method & Coverage & Mean width & Median width & Min roll. cov. & MSIS \\
\midrule
DtACI & 0.892 & \textbf{0.337} & \textbf{0.333} & 0.820 & \textbf{2.453} \\
AgACI & 0.894 & \textbf{0.337} & 0.335 & 0.810 & \textbf{2.453} \\
Conformal PID & 0.904 & 0.360 & 0.357 & \textbf{0.860} & 2.513 \\
PID-Expert Selection & 0.895 & 0.350 & 0.349 & 0.840 & 2.503 \\
PID-Expert Average & 0.909 & 0.363 & 0.358 & \textbf{0.860} & 2.504 \\
PID-Expert Aggregation & \textbf{0.903} & 0.352 & 0.350 & 0.850 & 2.489 \\
\bottomrule
\end{tabular*}

    \begin{tablenotes}[flushleft]\footnotesize
        \item Minimum rolling coverage is computed over a window of 100 observations.
    \end{tablenotes}
    \end{threeparttable}
\end{table}

\Cref{tab:nonlinear-four-method} reports the results for the simulated series. Bold entries indicate empirical coverage closest to the nominal level, the highest minimum rolling coverage, and the smallest width and MSIS values. DtACI and AgACI yield the narrowest intervals and lowest MSIS, but their empirical coverage falls below the nominal level and their minimum rolling coverage is also lower. PID-Expert aggregation achieves empirical coverage of $0.903$, the closest to the nominal level among the six methods. Compared with Conformal PID, it reduces the mean interval width from $0.360$ to $0.352$ and MSIS from $2.513$ to $2.489$, while maintaining a comparable minimum rolling coverage of $0.850$, versus $0.860$ for Conformal PID. PID-Expert selection further reduces the mean width slightly to $0.350$, but its coverage falls to $0.895$. By contrast, PID-Expert average attains higher coverage of $0.909$ but produces a wider interval with a mean width of $0.363$. Both alternatives have higher MSIS than PID-Expert aggregation. Overall, PID-Expert aggregation provides a favorable balance between coverage and interval efficiency.

\Cref{fig:nonlinear-rolling} shows how coverage and interval width evolve over rolling windows of 100 observations. Conformal PID and PID-Expert aggregation exhibit shallower declines in rolling coverage than DtACI and AgACI, with minimum rolling coverage of $0.860$ and $0.850$, compared with $0.820$ and $0.810$, respectively. PID-Expert selection reaches a lower minimum rolling coverage of $0.840$, whereas PID-Expert average matches Conformal PID at $0.860$. In terms of interval width, PID-Expert aggregation generally produces narrower rolling mean and median widths than Conformal PID and PID-Expert average while maintaining rolling coverage close to the target level.

\begin{figure}[!htbp]
    \centering
    \includegraphics[width=0.9\textwidth]{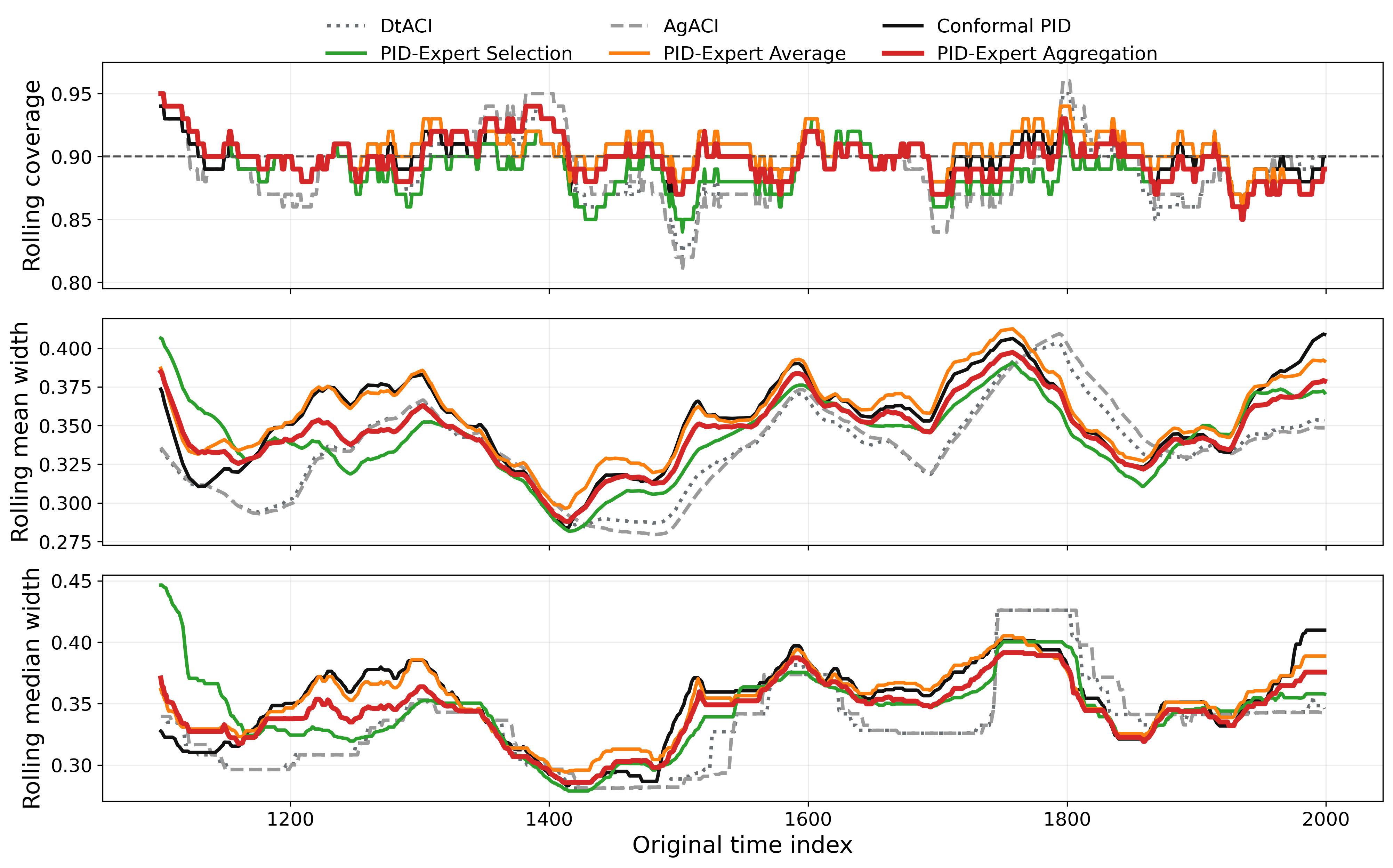}
    \caption{Rolling coverage, mean interval width, and median interval width in the nonlinear simulation, computed over a window of 100 observations. The horizontal dashed line marks the target coverage level of $0.9$.}
    \label{fig:nonlinear-rolling}
\end{figure}

\Cref{fig:nonlinear-expert-scatter} compares PID-Expert aggregation with all 343 candidate PID controllers in the library. Across these experts, empirical coverage ranges from $0.901$ to $0.909$, while mean interval width varies considerably. For reference, we define the hindsight best feasible expert as the controller with the smallest mean width among those achieving empirical coverage of at least $0.9$ on this realization. This expert has a mean width of $0.348$, compared with $0.352$ for PID-Expert aggregation and $0.360$ for Conformal PID. Based on these reported values, PID-Expert aggregation closes approximately $67\%$ of the gap in mean interval width between Conformal PID and the hindsight benchmark. PID-Expert selection produces a slightly narrower interval, with a mean width of $0.350$, but its empirical coverage falls to $0.895$, below the target level. In contrast, PID-Expert average raises coverage to $0.909$ but increases the mean width to $0.363$, exceeding that of Conformal PID. This comparison illustrates how online aggregation can approach the performance of a favorable library member without selecting that controller ex post using the evaluation outcomes.

\begin{figure}[!htbp]
    \centering
    \includegraphics[width=0.9\textwidth]{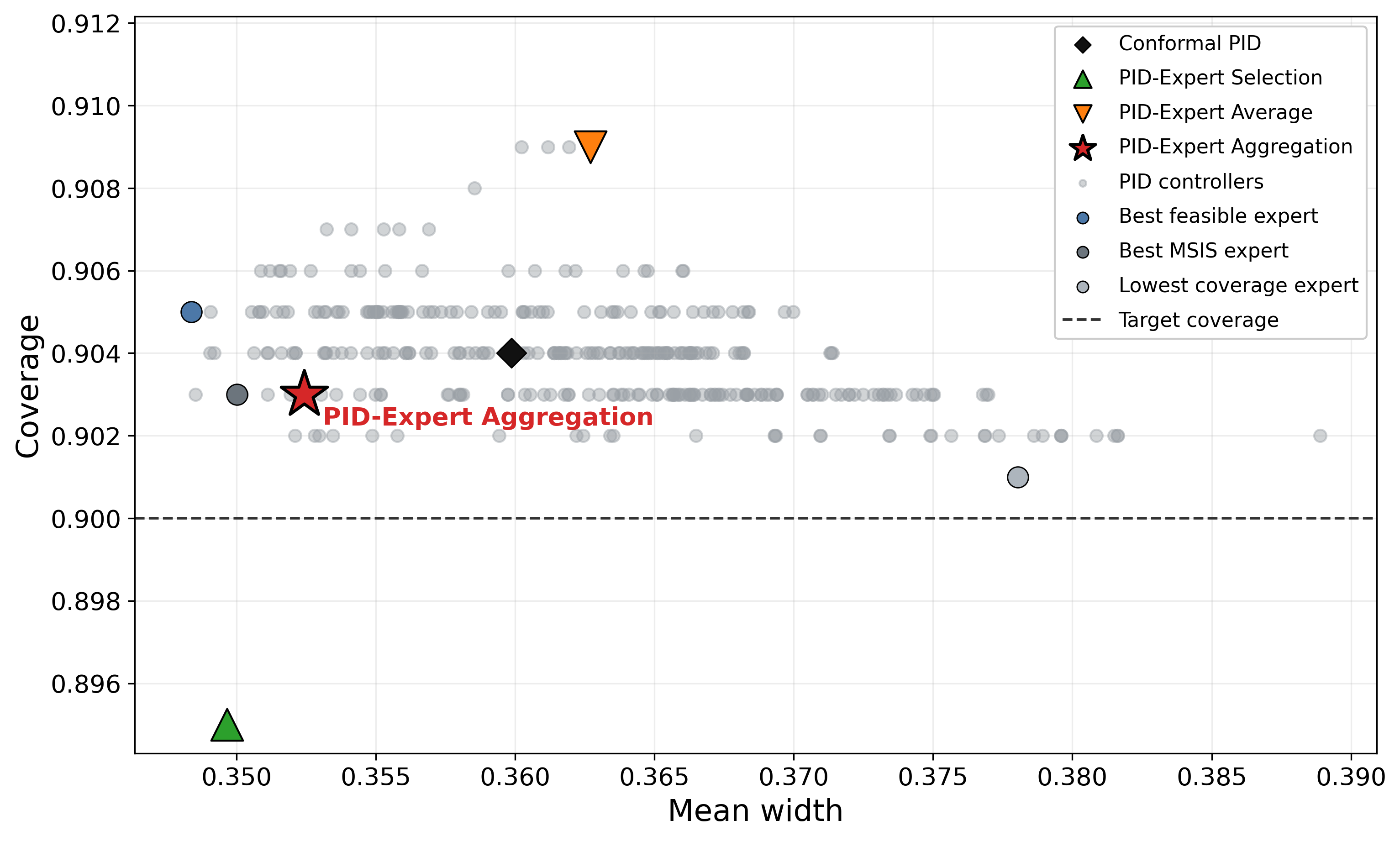}
    \caption{Empirical coverage and mean interval width for the 343 PID controllers, Conformal PID, PID-Expert selection, PID-Expert average, and PID-Expert aggregation in the nonlinear simulation. The best feasible expert is identified ex post from the evaluation sample and therefore represents a hindsight oracle.}
    \label{fig:nonlinear-expert-scatter}
\end{figure}

\Cref{fig:nonlinear-weights} illustrates the evolution of the aggregation weights for the five experts attaining the highest peak weights. E23 receives greater weight early in the evaluation period, followed by E27 over much of the middle portion. The allocation subsequently shifts toward E13, with E136 and E128 gaining greater influence later in the sequence. The experts favored in the earlier periods use $\eta=0.025$, whereas E136 and E128 use $\eta=0.075$ together with larger integral gains $K_I$. These evolving weight patterns illustrate how PID-Expert reallocates influence across heterogeneous controller configurations as their relative performance changes over time, rather than relying on a single prespecified configuration throughout.

\begin{figure}[!htbp]
    \centering
    \includegraphics[width=0.9\textwidth]{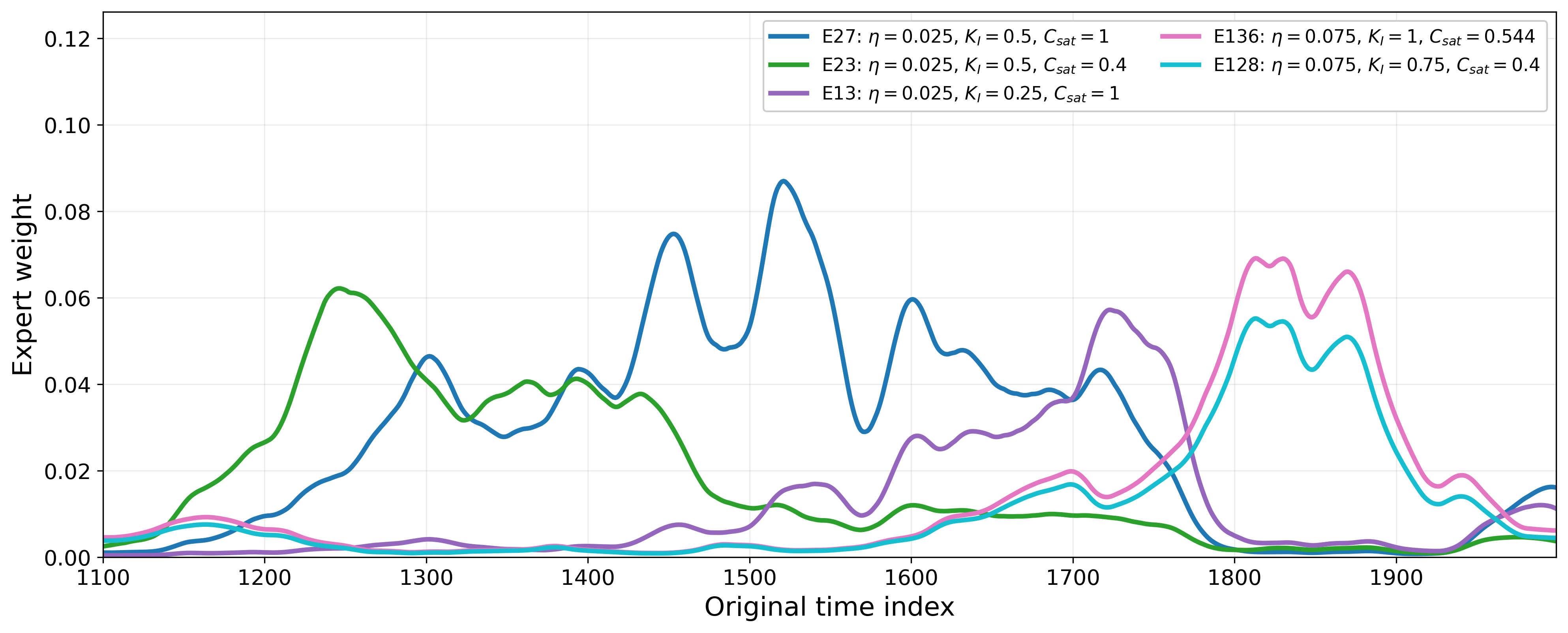}
    \caption{Evolution of the PID-Expert weights for the five experts attaining the highest peak weights in the nonlinear simulation. The displayed trajectories are smoothed using a moving average over 15 time steps for visualization, and labels such as E23 denote expert indices.}
    \label{fig:nonlinear-weights}
\end{figure}

\subsubsection{Distribution-shift simulations}

Following \citet{gibbs2024}, we consider three Gaussian simulation settings with stationary, gradually changing, and abruptly changing means. Conditional on the mean trajectory, observations are independent and satisfy
\begin{equation*}
    Y_t\mid\mu_t\sim N(\mu_t,1).
\end{equation*}
For each setting, we generate a sequence of 6,000 observations. The three settings differ only in the evolution of $\mu_t$.

\textbf{Stationary.}
The mean remains fixed at $\mu_t=0$, so the observations are independent standard normal random variables.

\textbf{Smooth shift.}
The mean evolves gradually through serially correlated increments. At the
update times, it follows
\begin{equation*}
    m_{j+1}
    =m_j+\frac{1}{2}(m_j-m_{j-1})+\frac{1}{2}\varepsilon_j,
    \qquad
    \varepsilon_j\stackrel{\mathrm{iid}}{\sim}N(0,0.006),
\end{equation*}
starting from $m_0=m_1=0$. Each mean value is held fixed for five consecutive observations, so that $\mu_t=m_{\lceil t/5\rceil}$.

\textbf{Jump shift.}
The sequence is divided into three equal segments. Within each segment, the mean alternates every 500 observations between $-0.075$ and $0.075$ in the first and last segments, and between $-1.5$ and $1.5$ in the middle segment. The resulting design introduces two abrupt transitions between regimes with markedly different magnitudes of distributional change.

We adapt these settings to symmetric prediction intervals, using the nonconformity score $s_t=|Y_t|$ and reporting $[-q_t,q_t]$ for all six methods considered. In the jump setting, the distribution of $|Y_t|$ changes with the magnitude of the mean but not with its sign, so sign reversals within a segment leave the score distribution unchanged.

PID-Expert aggregation uses the same library of $M=343$ controllers in all three scenarios, with $L_0=L=500$, giving $\sigma=0.001$ and $\xi=0.143$. Each scenario is evaluated over 100 trials with independent observation noise and a common mean trajectory, as shown in \Cref{fig:gibbs-mu}. Performance is evaluated after an initial period of 500 observations, and rolling metrics use windows of 500 observations.

\begin{figure}[!htbp]
    \centering
    \includegraphics[width=0.92\textwidth]{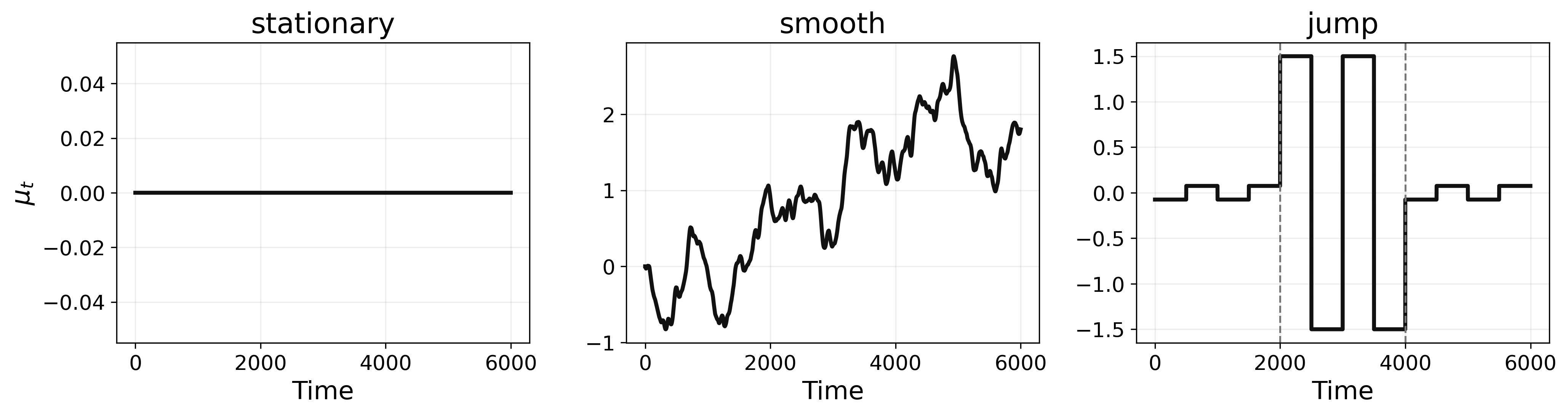}
    \caption{Mean trajectories used in the stationary, smooth-shift, and jump-shift simulation designs.}
    \label{fig:gibbs-mu}
\end{figure}

\begin{table}[!htbp]
    \centering
    \begin{threeparttable}
    \caption{Performance in the stationary, smooth-shift, and jump-shift
    simulations, averaged over 100 trials.}
    \label{tab:gibbs-summary}
    \small
    \begin{tabular*}{\textwidth}{@{\extracolsep{\fill}}llrrrr}
\toprule
Scenario & Method & Coverage & Mean width & Min roll. cov. & MSIS \\
\midrule
Stationary & DtACI & 0.8996 & 3.298 & 0.878 & 3.661 \\
 & AgACI & 0.8997 & \textbf{3.297} & 0.875 & \textbf{3.658} \\
 & Conformal PID & \textbf{0.8998} & 3.414 & \textbf{0.892} & 3.788 \\
 & PID-Expert Selection & 0.8983 & 3.313 & 0.884 & 3.694 \\
 & PID-Expert Average & 0.9034 & 3.420 & 0.891 & 3.757 \\
 & PID-Expert Aggregation & 0.8997 & 3.338 & 0.886 & 3.705 \\
\midrule
Smooth & DtACI & 0.9057 & 5.802 & 0.870 & 5.940 \\
 & AgACI & 0.9032 & 5.687 & 0.870 & 5.862 \\
 & Conformal PID & \textbf{0.8990} & 5.084 & \textbf{0.888} & 5.367 \\
 & PID-Expert Selection & 0.8951 & \textbf{4.945} & 0.867 & \textbf{5.262} \\
 & PID-Expert Average & 0.9025 & 5.091 & 0.887 & 5.323 \\
 & PID-Expert Aggregation & 0.8970 & 4.976 & 0.873 & 5.267 \\
\midrule
Jump & DtACI & 0.9027 & 4.526 & 0.861 & 4.796 \\
 & AgACI & 0.9009 & 4.436 & 0.848 & 4.732 \\
 & Conformal PID & \textbf{0.8997} & 4.259 & \textbf{0.888} & 4.596 \\
 & PID-Expert Selection & 0.8967 & \textbf{4.147} & 0.867 & \textbf{4.504} \\
 & PID-Expert Average & 0.9034 & 4.270 & \textbf{0.888} & 4.558 \\
 & PID-Expert Aggregation & 0.8988 & 4.172 & 0.873 & 4.507 \\
\bottomrule
\end{tabular*}

    \begin{tablenotes}[flushleft]\footnotesize
        \item Minimum rolling coverage over 500-observation windows is calculated within each trial and averaged across 100 trials.
    \end{tablenotes}
    \end{threeparttable}
\end{table}

\Cref{tab:gibbs-summary} summarizes performance averaged over 100 trials. PID-Expert aggregation consistently improves interval efficiency relative to Conformal PID, reducing mean width from $3.414$ to $3.338$ under stationarity, from $5.084$ to $4.976$ under smooth shifts, and from $4.259$ to $4.172$ under jump shifts. MSIS is also lower in all three scenarios, while empirical coverage remains close to the nominal level at $0.8997$, $0.8970$, and $0.8988$, respectively.

Among the PID-Expert variants, selection produces the narrowest intervals and the lowest MSIS under smooth and jump shifts, but also has the lowest empirical and minimum rolling coverage. In contrast, PID-Expert average maintains higher coverage and minimum rolling coverage, but produces slightly wider intervals than Conformal PID in all three scenarios. PID-Expert aggregation lies between these two alternatives, retaining much of the width reduction from selection while providing better coverage. Under stationarity, DtACI and AgACI achieve the smallest widths and MSIS with coverage close to $0.9$. Under smooth and jump shifts, however, both methods produce substantially wider intervals and higher MSIS than the PID-based methods, despite coverage at or above the nominal level.

\begin{figure}[!htbp]
    \centering
    \begin{subfigure}{0.96\textwidth}
        \centering
        \includegraphics[width=\textwidth]{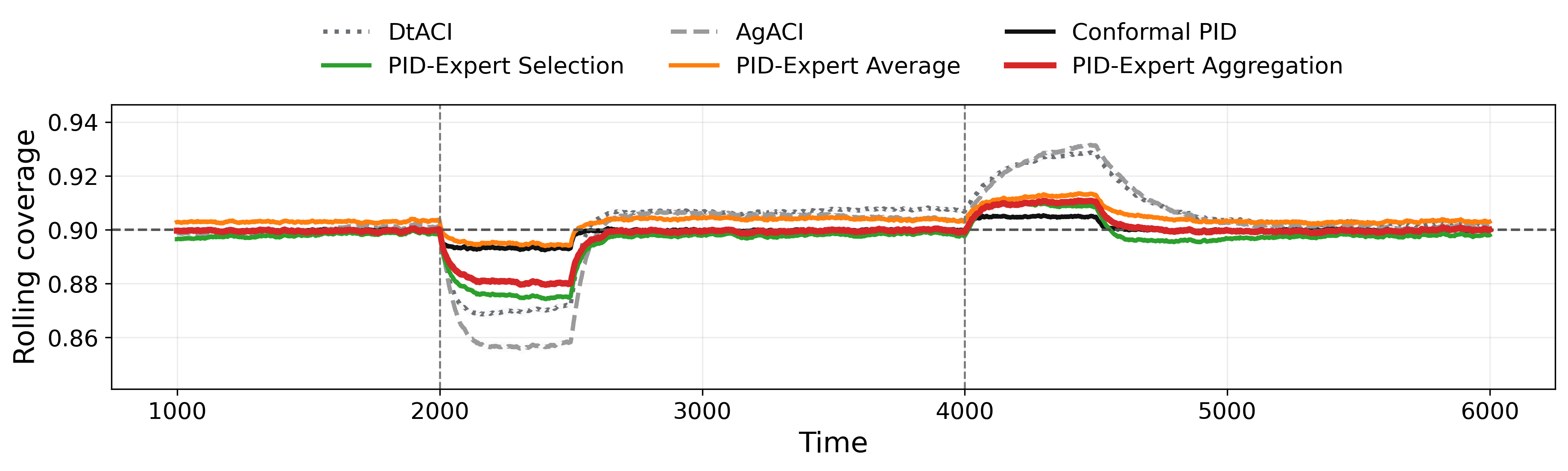}
        \caption{Rolling coverage}
    \end{subfigure}
    \vspace{0.15cm}
    \begin{subfigure}{0.96\textwidth}
        \centering
        \includegraphics[width=\textwidth]{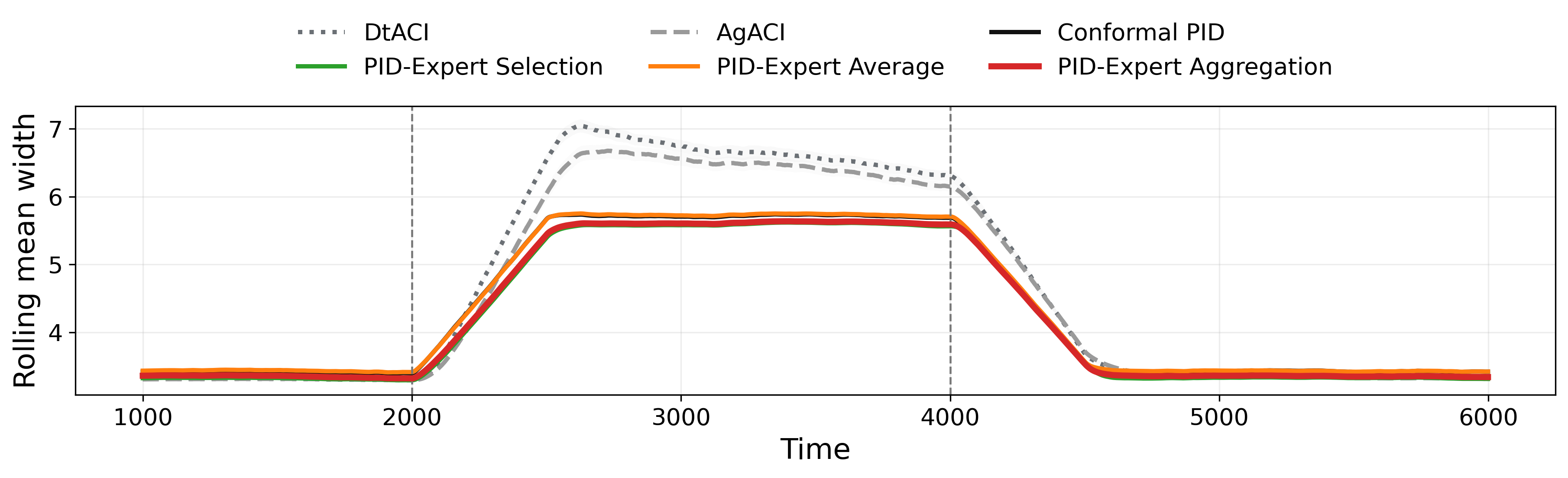}
        \caption{Rolling mean width}
    \end{subfigure}
    \caption{Jump-shift simulation: coverage and mean interval width over rolling windows of 500 observations, averaged over 100 trials. Shaded regions show approximate pointwise 95\% Monte Carlo confidence intervals for the mean trajectories. The horizontal dashed line marks nominal coverage of 0.9, and vertical dashed lines mark changes in mean magnitude.}
    \label{fig:gibbs-jump}
\end{figure}

\Cref{fig:gibbs-jump} shows the rolling coverage and mean width in the jump scenario, averaged over trials. The increase in mean magnitude leads to a temporary decline in coverage for all six methods. This decline is smaller for the PID-based methods than for DtACI and AgACI, although PID-Expert aggregation experiences a somewhat larger dip than Conformal PID. PID-Expert selection exhibits a deeper coverage decline than PID-Expert aggregation and produces slightly narrower intervals, whereas PID-Expert average shows a shallower decline but wider intervals. When the mean magnitude decreases, the ACI variants exhibit a larger and more persistent coverage overshoot. Throughout most of the high-magnitude regime, PID-Expert aggregation closely tracks PID-Expert selection in mean width and generally remains slightly below PID-Expert average and Conformal PID.

\subsection{Real-data applications}

In this subsection, we use two real-world datasets to evaluate the proposed method. Following \citet{montero2020fforma}, we consider six forecasting methods: random walk with drift, automatic exponential smoothing (Auto ETS), feed-forward neural networks with a single hidden layer (NNAR), Theta, TBATS, and STLM-AR. These methods are implemented using the \textit{forecast} package in R \citep{hyndman2024}. To examine performance under different forecasting specifications, we report results for the first three individual models and for an equally weighted average forecast (Average model) constructed from all six models.

\begin{figure}[!htbp]
    \centering
    \begin{subfigure}{0.98\textwidth}
        \centering
        \includegraphics[width=\textwidth]{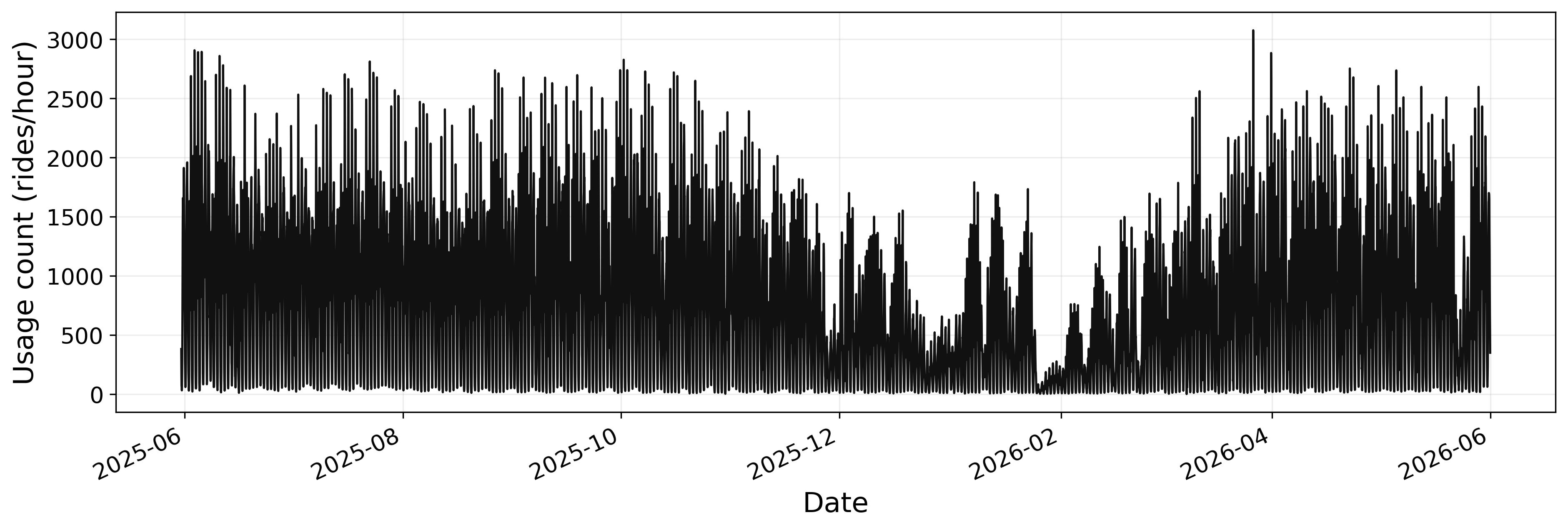}
        \caption{Capital Bikeshare hourly usage from 31 May 2025 to 31 May 2026.}
    \end{subfigure}
    \vspace{0.1cm}
    \begin{subfigure}{0.98\textwidth}
        \centering
        \includegraphics[width=\textwidth]{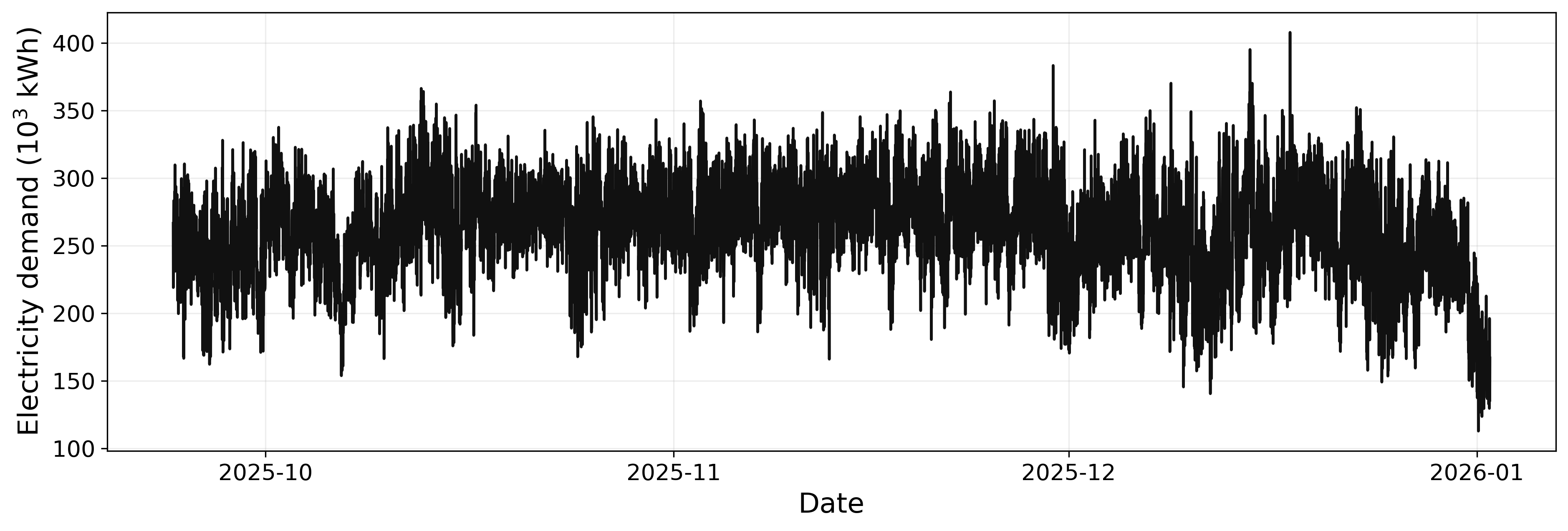}
        \caption{Ordos electricity demand from 24 September 2025 to 1 January 2026.}
    \end{subfigure}
    \caption{Observed Capital Bikeshare usage and Ordos electricity demand.}
    \label{fig:real-series}
\end{figure}

\subsubsection{Capital Bikeshare data}

The Capital Bikeshare dataset contains hourly trip counts for the Washington, DC metropolitan area. We use 8,784 hourly observations spanning from 00:00 on 31 May 2025 to 23:00 on 31 May 2026. As shown in the upper panel of \Cref{fig:real-series}, the series exhibits pronounced high-frequency periodicity together with substantial seasonal variation in both its level and volatility. Usage is generally higher during the warmer months, declines markedly over the winter, and rises again in spring. The pronounced changes in the level and amplitude of the series provide a challenging setting for online prediction interval calibration. We use the first 28 days to train the forecasting models and set $L=L_0=168$, corresponding to one week of hourly observations. This yields $\sigma=0.00298$ and $\xi=0.243$. Interval widths are measured in rides per hour. The data are publicly available at \url{https://capitalbikeshare.com/system-data}.

\begin{table}[!htbp]
    \centering
    \begin{threeparttable}
    \caption{Capital Bikeshare results across the four forecasting specifications, reporting empirical coverage, mean and median interval widths, and MSIS. Widths are measured in rides per hour.}
    \label{tab:bikeshare-results}
    \scriptsize
    \begin{tabular*}{\textwidth}{@{\extracolsep{\fill}}llrrrr}
\toprule
Forecasting model & Method & Coverage & Mean width & Median width & MSIS \\
\midrule
Random walk with drift & DtACI & 0.896 & 881.2 & 906.5 & 6.00 \\
 & AgACI & 0.906 & 899.1 & \textbf{886.7} & \textbf{5.91} \\
 & Conformal PID & \textbf{0.899} & 910.8 & 920.0 & 6.09 \\
 & PID-Expert Selection & 0.888 & \textbf{857.9} & 898.4 & 6.21 \\
 & PID-Expert Average & 0.911 & 907.1 & 916.1 & 5.98 \\
 & PID-Expert Aggregation & 0.898 & 866.8 & 898.3 & 6.03 \\
\midrule
Auto ETS & DtACI & 0.896 & 577.0 & 619.2 & 4.65 \\
 & AgACI & \textbf{0.900} & 590.4 & 620.9 & \textbf{4.59} \\
 & Conformal PID & 0.899 & 603.7 & 614.0 & 4.76 \\
 & PID-Expert Selection & 0.886 & \textbf{561.8} & \textbf{597.8} & 4.72 \\
 & PID-Expert Average & 0.907 & 609.8 & 624.7 & 4.68 \\
 & PID-Expert Aggregation & 0.897 & 570.6 & 602.5 & 4.65 \\
\midrule
NNAR & DtACI & 0.894 & 334.0 & \textbf{313.9} & 2.84 \\
 & AgACI & 0.903 & 349.1 & 320.4 & \textbf{2.81} \\
 & Conformal PID & 0.899 & 355.0 & 327.4 & 2.89 \\
 & PID-Expert Selection & 0.890 & \textbf{326.8} & 318.7 & 2.89 \\
 & PID-Expert Average & 0.912 & 358.3 & 331.9 & 2.83 \\
 & PID-Expert Aggregation & \textbf{0.900} & 332.2 & 322.2 & 2.84 \\
\midrule
Average model & DtACI & 0.896 & 446.2 & 466.9 & \textbf{3.56} \\
 & AgACI & \textbf{0.900} & 461.9 & \textbf{463.4} & 3.58 \\
 & Conformal PID & 0.899 & 477.7 & 484.1 & 3.65 \\
 & PID-Expert Selection & 0.890 & \textbf{440.5} & 471.0 & 3.59 \\
 & PID-Expert Average & 0.906 & 482.9 & 491.2 & 3.61 \\
 & PID-Expert Aggregation & 0.896 & 446.1 & 471.4 & 3.57 \\
\bottomrule
\end{tabular*}

    \end{threeparttable}
\end{table}

As shown in \Cref{tab:bikeshare-results}, PID-Expert aggregation consistently produces narrower mean intervals than DtACI, AgACI, and Conformal PID across all four forecasting specifications, while maintaining empirical coverage close to the nominal level of $0.9$. Relative to Conformal PID, it reduces mean interval width by $4.8\%$, $5.5\%$, $6.4\%$, and $6.6\%$ for the random walk with drift, Auto ETS, NNAR, and Average model, respectively, with only small changes in empirical coverage. It also achieves lower MSIS than Conformal PID in all four cases. PID-Expert selection yields the smallest mean widths, but at the cost of lower coverage, ranging from $0.886$ to $0.890$. In contrast, PID-Expert average attains higher coverage, ranging from $0.906$ to $0.912$, but produces wider intervals. PID-Expert aggregation lies between these two alternatives, keeping coverage closer to the target while retaining much of the width reduction achieved by selection. Although it does not uniformly attain the smallest median width or MSIS, both measures remain competitive in each case.

\begin{figure}[!htbp]
    \centering
    \includegraphics[width=0.98\textwidth]{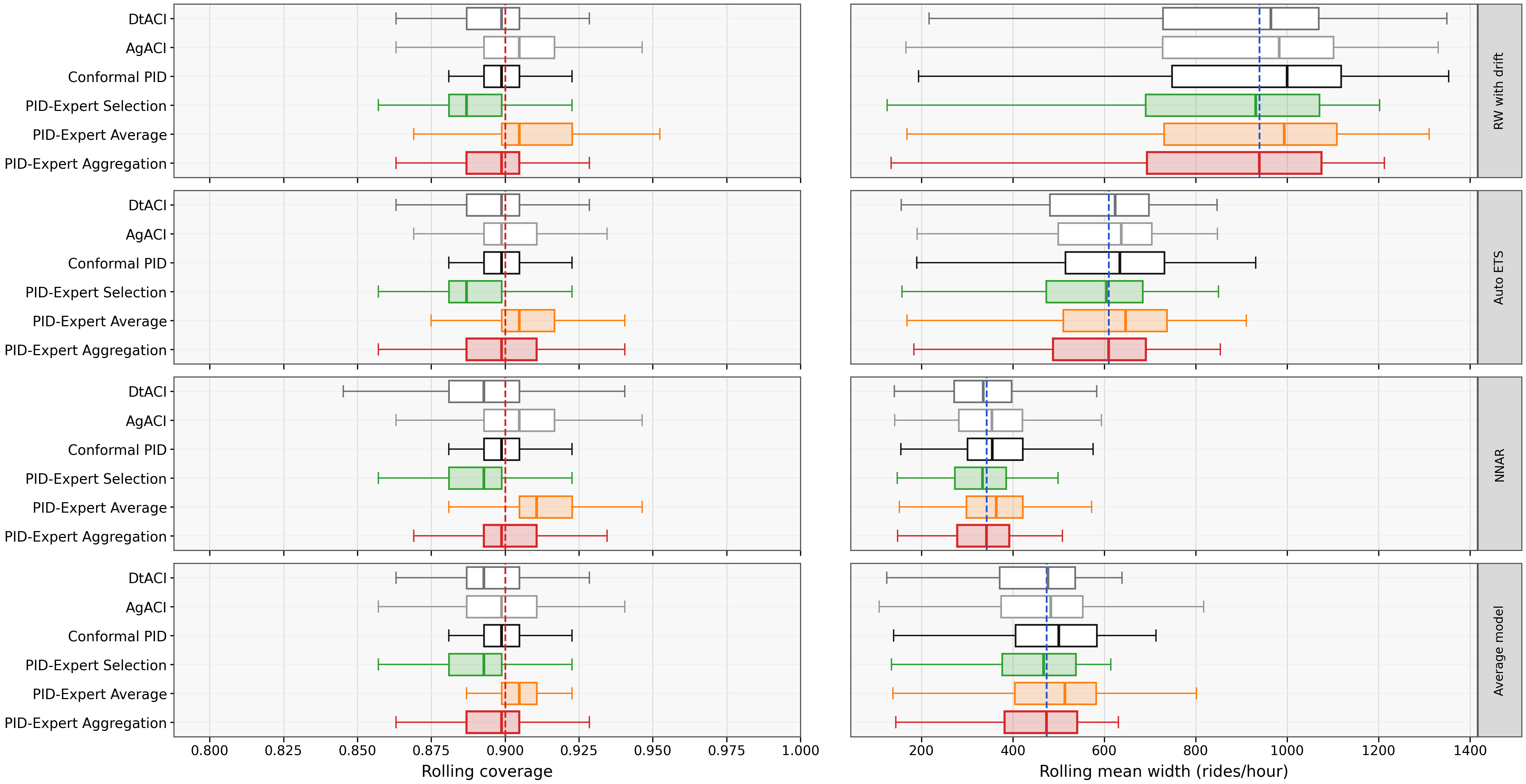}
    \caption{Rolling coverage and mean interval width for the Capital Bikeshare application. Dashed lines indicate the nominal coverage level of 0.9 and the median rolling mean width of PID-Expert aggregation.}
    \label{fig:bikeshare-rolling-boxplots}
\end{figure}

\Cref{fig:bikeshare-rolling-boxplots} further examines the distribution of rolling performance. Across all four forecasting specifications, the rolling mean width distribution of PID-Expert aggregation is consistently shifted below that of Conformal PID, while its rolling coverage remains centered close to the nominal level of 0.9. The three PID-Expert variants also reveal a clear coverage--width trade-off. PID-Expert selection generally produces the narrowest intervals, but its rolling coverage distributions are systematically shifted below the target. PID-Expert average shifts coverage upward, but at the cost of noticeably wider intervals. PID-Expert aggregation lies between these two alternatives, retaining much of the width reduction from selection while bringing coverage closer to the target than selection and producing narrower intervals than averaging. The consistency of this pattern across the four forecasting models suggests that the gain from aggregation is not specific to a particular forecasting specification.

\subsubsection{Electricity demand in Ordos}

Accurate electricity demand forecasting is an important component of power system operation and planning, with implications for power dispatch, generation scheduling, and grid reliability \citep{Weron2014,kath2021conformal,Devine2026,Lipiecki2026}. The electricity demand dataset consists of operational metering records from 153 industrial enterprises above designated size in Ordos, China, sampled at 15-minute intervals. We use 9,600 observations spanning from 00:00 on 24 September 2025 to 23:45 on 1 January 2026. Electricity demand, measured in kWh, is used as the response variable. The Ordos electricity demand data used in this study are available from the corresponding author upon request.

As shown in the lower panel of \Cref{fig:real-series}, the electricity series exhibits substantial high-frequency variation around a relatively stable level over most of the sample period. The variability becomes more pronounced in parts of December, followed by a marked decline in demand toward the end of the observation period. These time-varying features provide a useful setting for evaluating sequential prediction intervals under changing demand conditions. We use the first 14 days to train the forecasting models described above and set the local adaptation and initial score windows to $L=L_0=672$, corresponding to one week of 15-minute observations. This yields $\sigma=0.000744$ and $\xi=0.128$. Interval widths are reported in units of $10^3$ kWh.

\Cref{tab:energy-results} reports the results across the four forecasting specifications. PID-Expert aggregation consistently produces narrower mean intervals than Conformal PID while maintaining empirical coverage close to the nominal level of $0.9$. Relative to Conformal PID, it reduces mean interval width by approximately $1.7\%$--$3.3\%$ and MSIS by $1.2\%$--$2.4\%$. PID-Expert selection yields still narrower mean intervals, but with lower coverage under the random walk with drift, Auto ETS, and NNAR specifications. In contrast, PID-Expert average is more conservative, with coverage of $0.907$--$0.908$ and wider intervals than both Conformal PID and PID-Expert aggregation.

Comparisons with DtACI and AgACI vary across forecasting models. For the random walk with drift and Auto ETS specifications, AgACI attains the lowest MSIS, although PID-Expert aggregation remains competitive in interval width. Under NNAR, PID-Expert aggregation shares the lowest reported MSIS with DtACI and AgACI, while under the Average model it achieves both a narrower mean interval and a lower MSIS than the two ACI methods. Overall, the results show that PID-Expert aggregation improves interval efficiency relative to Conformal PID consistently across forecasting specifications, while maintaining coverage close to the target.

\begin{table}[!htbp]
    \centering
    \begin{threeparttable}
    \caption{Ordos electricity demand results across the four forecasting specifications. The table reports empirical coverage, mean and median interval widths, and the mean scaled interval score (MSIS). Interval widths are measured in $10^3$ kWh.}
    \label{tab:energy-results}
    \scriptsize
    \begin{tabular*}{\textwidth}{@{\extracolsep{\fill}}llrrrr}
\toprule
Forecasting model & Method & Coverage & Mean width & Median width & MSIS \\
\midrule
Random walk with drift & DtACI & 0.899 & \textbf{127.5} & \textbf{126.0} & 4.83 \\
 & AgACI & 0.903 & 129.0 & 126.1 & \textbf{4.79} \\
 & Conformal PID & \textbf{0.900} & 130.4 & 128.5 & 4.88 \\
 & PID-Expert Selection & 0.898 & \textbf{127.5} & 128.3 & 4.89 \\
 & PID-Expert Average & 0.907 & 131.4 & 129.7 & 4.80 \\
 & PID-Expert Aggregation & 0.901 & 128.2 & 128.5 & 4.82 \\
\midrule
Auto ETS & DtACI & 0.901 & 89.0 & \textbf{86.2} & 3.45 \\
 & AgACI & \textbf{0.900} & 89.1 & 86.5 & \textbf{3.44} \\
 & Conformal PID & \textbf{0.900} & 92.0 & 89.8 & 3.53 \\
 & PID-Expert Selection & 0.898 & \textbf{88.6} & 87.2 & 3.45 \\
 & PID-Expert Average & 0.908 & 92.5 & 90.0 & 3.48 \\
 & PID-Expert Aggregation & 0.899 & 89.0 & 87.7 & 3.45 \\
\midrule
NNAR & DtACI & \textbf{0.899} & 83.4 & \textbf{78.6} & \textbf{3.23} \\
 & AgACI & \textbf{0.899} & 83.8 & 79.1 & \textbf{3.23} \\
 & Conformal PID & \textbf{0.899} & 85.5 & 81.9 & 3.29 \\
 & PID-Expert Selection & 0.896 & \textbf{83.2} & 78.9 & 3.25 \\
 & PID-Expert Average & 0.907 & 86.2 & 82.2 & 3.25 \\
 & PID-Expert Aggregation & \textbf{0.899} & 83.6 & 80.0 & \textbf{3.23} \\
\midrule
Average model & DtACI & \textbf{0.900} & 84.4 & \textbf{80.7} & 3.26 \\
 & AgACI & 0.901 & 84.5 & 81.0 & 3.26 \\
 & Conformal PID & \textbf{0.900} & 86.9 & 83.7 & 3.33 \\
 & PID-Expert Selection & \textbf{0.900} & \textbf{83.6} & 81.3 & 3.26 \\
 & PID-Expert Average & 0.908 & 87.5 & 84.8 & 3.29 \\
 & PID-Expert Aggregation & \textbf{0.900} & 84.0 & 82.0 & \textbf{3.25} \\
\bottomrule
\end{tabular*}

    \end{threeparttable}
\end{table}

\Cref{fig:energy-rolling-boxplots} further reveals the coverage--width trade-off at the rolling-window level. Across all four forecasting specifications, PID-Expert aggregation keeps the rolling coverage distribution centered close to the nominal level of $0.9$, while its rolling mean widths are consistently lower than those of Conformal PID. The lower median rolling width supports the efficiency gains reported in \Cref{tab:energy-results}. The three PID-Expert variants display a clear ordering: selection generally attains the narrowest intervals but shifts the rolling coverage distribution downward, whereas averaging shifts coverage upward at the cost of wider intervals. PID-Expert aggregation lies between these two extremes, retaining much of the width reduction from selection and generally raising rolling coverage, while producing narrower intervals than averaging. The same pattern across the four forecasting models highlights the stability of this coverage--efficiency trade-off across different underlying forecasts.

\begin{figure}[!htbp]
    \centering
    \includegraphics[width=0.98\textwidth]{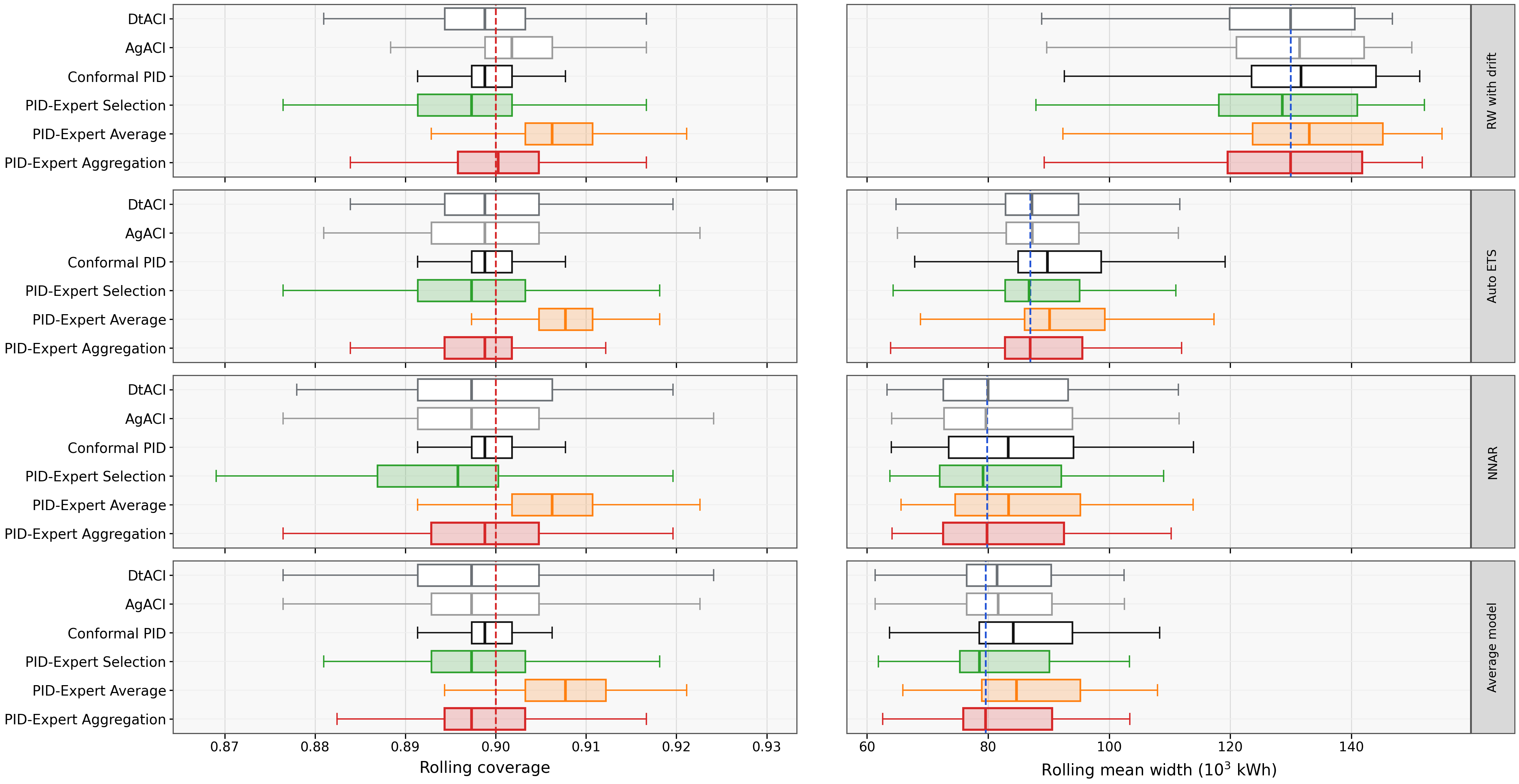}
    \caption{Rolling coverage and mean interval width for the Ordos electricity demand application, computed over windows of 672 observations. The red dashed lines indicate the nominal coverage level of 0.9, and the blue dashed lines indicate the median rolling mean width of PID-Expert aggregation. Interval widths are measured in $10^3$ kWh.}
    \label{fig:energy-rolling-boxplots}
\end{figure}

\FloatBarrier

\section{Conclusions}
\label{sec:conclusion}

In this paper, we propose PID-Expert aggregation, an online conformal framework that adaptively combines the score thresholds generated by a heterogeneous library of PID controllers. Rather than relying on a single prespecified controller, the proposed method updates expert weights through a loss that jointly accounts for normalized interval width and expert-level miscoverage. An aggregate coverage multiplier further adjusts the relative penalty on miscoverage in response to realized coverage errors. This threshold-level aggregation provides a direct mechanism for adapting the coverage--efficiency tradeoff while preserving the distinct calibration dynamics of the individual controllers.

The theoretical analysis characterizes both the aggregation and coverage properties of the method. Local regret bounds compare the weighted average of expert losses with any fixed expert over arbitrary time intervals and motivate practical reference choices for the learning rate and share parameter. Separately, the multiplier recursion yields a deterministic, pathwise upper bound on time-averaged aggregate miscoverage. Under a high-multiplier negative-drift condition, the multiplier grows sublinearly, leading to a nonasymptotic expected miscoverage bound of order $O(T^{-1/2})$ above the target and almost-sure asymptotic upper control.

Across four simulation settings and two real-world forecasting applications, PID-Expert consistently produces narrower mean prediction intervals than the prespecified Conformal PID benchmark while maintaining empirical coverage close to the nominal level. Its relative performance against DtACI and AgACI varies across settings, indicating that aggregating score thresholds and aggregating adaptive miscoverage levels offer complementary advantages across different data regimes. The two additional PID-Expert benchmarks clarify the coverage--efficiency tradeoff. Selection generally produces narrower intervals with lower coverage, whereas equal averaging produces wider intervals with higher coverage. Aggregation retains much of the width reduction from selection while generally bringing empirical coverage closer to the nominal level, but it does not uniformly achieve the lowest MSIS or eliminate local coverage shortfalls. Overall, the results support online aggregation as a way to reduce reliance on a single PID configuration while balancing interval width and coverage.

The present study is limited to one-step-ahead symmetric intervals based on absolute-error scores and a fixed, finite controller library. Its performance therefore depends on whether the library contains controllers with suitable score scales and adaptation dynamics. Future work could extend the framework to asymmetric tail-specific intervals, horizon-dependent aggregation for multistep forecasting, and data-driven procedures for constructing or expanding the controller library online. Such extensions would broaden the applicability of PID-Expert aggregation while retaining its central principle of jointly adapting calibration and interval efficiency.
Proofs and implementation details are given in the appendices. The implementation and available data are provided at \url{https://github.com/Guodong-Liu/PID_Expert}.

\section*{Acknowledgments}

This research was supported by an IIF-SAS Grant, which promotes research in forecasting. Xiaoqian Wang is supported by the National Natural Science Foundation of China (No. 72601399) and the Presidential Foundation of the Academy of Mathematics and Systems Science, Chinese Academy of Sciences, China (No. E555930101). Yanfei Kang is supported by the National Natural Science Foundation of China (No. 72571014).

\section*{Declaration of generative AI use}

During the preparation of this work the authors used ChatGPT to improve English language and assist in coding to create graphical illustrations. After using these tools, the authors reviewed and edited the content as needed and take full responsibility for the content of the publication.

\bibliographystyle{agsm}
\bibliography{references}

@article{angelopoulos2023pid,
  author       = {Angelopoulos, Anastasios N. and Cand\`es, Emmanuel J. and Tibshirani, Ryan J.},
  journal      = {Advances in Neural Information Processing Systems},
  pages        = {23047--23074},
  title        = {Conformal {PID} Control for Time Series Prediction},
  volume       = {36},
  year         = {2023}
}

@article{angelopoulos2023gentle,
  author       = {Angelopoulos, Anastasios N. and Bates, Stephen},
  journal      = {Foundations and Trends in Machine Learning},
  number       = {4},
  pages        = {494--591},
  title        = {Conformal Prediction: A Gentle Introduction},
  volume       = {16},
  year         = {2023}
}

@inproceedings{areces2025optimization,
  author       = {Areces, Felipe and Mohri, Christopher and Hashimoto, Tatsunori and Duchi, John},
  booktitle    = {Proceedings of the 42nd International Conference on Machine Learning},
  pages        = {1604--1649},
  publisher    = {PMLR},
  series       = {Proceedings of Machine Learning Research},
  title        = {Online Conformal Prediction via Online Optimization},
  volume       = {267},
  year         = {2025}
}

@article{assimakopoulos2000,
  author       = {Assimakopoulos, Vassilis and Nikolopoulos, Konstantinos},
  journal      = {International Journal of Forecasting},
  number       = {4},
  pages        = {521--530},
  publisher    = {Elsevier},
  title        = {The theta model: a decomposition approach to forecasting},
  volume       = {16},
  year         = {2000}
}

@article{barber2023,
  author       = {Barber, Rina Foygel and Cand\`es, Emmanuel J. and Ramdas, Aaditya and Tibshirani, Ryan J.},
  journal      = {The Annals of Statistics},
  number       = {2},
  pages        = {816--845},
  publisher    = {Institute of Mathematical Statistics},
  title        = {Conformal Prediction beyond Exchangeability},
  volume       = {51},
  year         = {2023}
}

@article{chatfield1993,
  author       = {Chatfield, Chris},
  journal      = {Journal of Business \& Economic Statistics},
  number       = {2},
  pages        = {121--135},
  publisher    = {Taylor \& Francis},
  title        = {Calculating interval forecasts},
  volume       = {11},
  year         = {1993}
}

@inproceedings{chernozhukov2018,
  author       = {Chernozhukov, Victor and W{\"u}thrich, Kaspar and Yinchu, Zhu},
  booktitle    = {Conference on Learning Theory},
  organization = {PMLR},
  pages        = {732--749},
  title        = {Exact and robust conformal inference methods for predictive machine learning with dependent data},
  year         = {2018}
}

@article{Devine2026,
  author  = {Mel T. Devine and Valentin Bertsch},
  doi     = {10.1016/j.ejor.2026.01.006},
  issn    = {0377-2217},
  journal = {European Journal of Operational Research},
  title   = {Analysing the interactions between demand side and supply side investment decisions in an oligopolistic electricity market using a stochastic equilibrium model},
  year    = {2026},
  note    = {In press}
}

@article{gibbs2021,
  author       = {Gibbs, Isaac and Cand{\`e}s, Emmanuel J},
  journal      = {Advances in Neural Information Processing Systems},
  pages        = {1660--1672},
  title        = {Adaptive conformal inference under distribution shift},
  volume       = {34},
  year         = {2021}
}

@article{gibbs2024,
  author       = {Gibbs, Isaac and Cand{\`e}s, Emmanuel J},
  journal      = {Journal of Machine Learning Research},
  number       = {162},
  pages        = {1--36},
  title        = {Conformal inference for online prediction with arbitrary distribution shifts},
  volume       = {25},
  year         = {2024}
}

@article{gneiting2007,
  author       = {Gneiting, Tilmann and Balabdaoui, Fadoua and Raftery, Adrian E},
  journal      = {Journal of the Royal Statistical Society Series B: Statistical Methodology},
  number       = {2},
  pages        = {243--268},
  publisher    = {Oxford University Press},
  title        = {Probabilistic forecasts, calibration and sharpness},
  volume       = {69},
  year         = {2007}
}

@article{gneiting2007-st,
  author       = {Gneiting, Tilmann and Raftery, Adrian E},
  journal      = {Journal of the American Statistical Association},
  number       = {477},
  pages        = {359--378},
  publisher    = {Taylor \& Francis},
  title        = {Strictly proper scoring rules, prediction, and estimation},
  volume       = {102},
  year         = {2007}
}

@article{gneiting2014,
  author       = {Gneiting, Tilmann and Katzfuss, Matthias},
  journal      = {Annual Review of Statistics and Its Application},
  number       = {1},
  pages        = {125--151},
  publisher    = {Annual Reviews},
  title        = {Probabilistic forecasting},
  volume       = {1},
  year         = {2014}
}

@article{Wang2023,
  author    = {Wang, Xiaoqian and Hyndman, Rob J and Li, Feng and Kang, Yanfei},
  title     = {Forecast combinations: {An} over 50-year review},
  journal   = {International Journal of Forecasting},
  year      = {2023},
  volume    = {39},
  number    = {4},
  pages     = {1518--1547},
  publisher = {Elsevier}
}

@inproceedings{hajihashemi2024ensemble,
  author       = {Hajihashemi, Erfan and Shen, Yanning},
  booktitle    = {Advances in Neural Information Processing Systems},
  doi          = {10.52202/079017-3769},
  title        = {Multi-model Ensemble Conformal Prediction in Dynamic Environments},
  volume       = {37},
  year         = {2024}
}

@article{herbster1998tracking,
  author       = {Herbster, Mark and Warmuth, Manfred K},
  journal      = {Machine Learning},
  pages        = {151--178},
  publisher    = {Springer},
  title        = {Tracking the best expert},
  volume       = {32},
  year         = {1998}
}

@inproceedings{hu2026distribution,
  author       = {Hu, Dongjian and Wu, Junxi and Xia, Shu-Tao and Zou, Changliang},
  booktitle    = {International Conference on Learning Representations},
  pages        = {58230--58255},
  title        = {Distribution-informed online conformal prediction},
  volume       = {2026},
  year         = {2026}
}

@book{hyndman2021,
  author       = {Hyndman, Rob J. and Athanasopoulos, George},
  edition      = {3rd},
  publisher    = {OTexts},
  title        = {Forecasting: Principles and Practice},
  year         = {2021}
}

@manual{hyndman2024,
  author       = {Hyndman, Rob J and Athanasopoulos, George and Bergmeir, Christoph and Caceres, Gabriel and Chhay, Leanne and O'Hara-Wild, Mitchell and Petropoulos, Fotios and Razbash, Slava and Wang, Earo and Yasmeen, Farah},
  note         = {R package version 9.0.0. \url{https://pkg.robjhyndman.com/forecast/}},
  title        = {forecast: Forecasting functions for time series and linear models},
  year         = {2026}
}

@article{kath2021conformal,
  author       = {Kath, Christopher and Ziel, Florian},
  journal      = {International Journal of Forecasting},
  number       = {2},
  pages        = {777--799},
  publisher    = {Elsevier},
  title        = {Conformal prediction interval estimation and applications to day-ahead and intraday power markets},
  volume       = {37},
  year         = {2021}
}

@article{KOO2015166,
  author       = {Bonsoo Koo and Myung Hwan Seo},
  doi          = {10.1016/j.jeconom.2015.03.046},
  issn         = {0304-4076},
  journal      = {Journal of Econometrics},
  number       = {1},
  pages        = {166--181},
  title        = {Structural-break models under mis-specification: Implications for forecasting},
  volume       = {188},
  year         = {2015}
}

@article{Koo17022026,
  author    = {Bonsoo Koo and Benjamin Wong and Ze-Yu Zhong},
  doi       = {10.1080/07350015.2025.2583205},
  journal   = {Journal of Business \& Economic Statistics},
  title     = {Disentangling Structural Breaks in Factor Models for Macroeconomic Data},
  publisher = {Taylor \& Francis},
  year      = {2026},
  note      = {In press}
}

@article{lei2018,
  author       = {Lei, Jing and G'Sell, Max and Rinaldo, Alessandro and Tibshirani, Ryan J. and Wasserman, Larry},
  journal      = {Journal of the American Statistical Association},
  number       = {523},
  pages        = {1094--1111},
  publisher    = {Taylor \& Francis},
  title        = {Distribution-Free Predictive Inference for Regression},
  volume       = {113},
  year         = {2018}
}

@article{Lipiecki2026,
  author       = {Arkadiusz Lipiecki and Kaja Bilińska and Nikolaos Kourentzes and Rafał Weron},
  doi          = {10.1016/j.ijforecast.2026.03.003},
  issn         = {0169-2070},
  journal      = {International Journal of Forecasting},
  number       = {4},
  pages        = {1215--1225},
  title        = {Stealing accuracy: Predicting day-ahead electricity prices with temporal hierarchy forecasting ({THieF})},
  volume       = {42},
  year         = {2026}
}

@article{montero2020fforma,
  author       = {Montero-Manso, Pablo and Athanasopoulos, George and Hyndman, Rob J and Talagala, Thiyanga S},
  journal      = {International Journal of Forecasting},
  number       = {1},
  pages        = {86--92},
  publisher    = {Elsevier},
  title        = {{FFORMA}: Feature-based forecast model averaging},
  volume       = {36},
  year         = {2020}
}

@inproceedings{papadopoulos2002,
  author       = {Papadopoulos, Harris and Proedrou, Kostas and Vovk, Volodya and Gammerman, Alex},
  booktitle    = {European Conference on Machine Learning},
  organization = {Springer},
  pages        = {345--356},
  title        = {Inductive confidence machines for regression},
  year         = {2002}
}

@article{Petropoulos2018,
  author       = {Fotios Petropoulos and Rob J. Hyndman and Christoph Bergmeir},
  doi          = {10.1016/j.ejor.2018.01.045},
  issn         = {0377-2217},
  journal      = {European Journal of Operational Research},
  number       = {2},
  pages        = {545--554},
  title        = {Exploring the sources of uncertainty: Why does bagging for time series forecasting work?},
  volume       = {268},
  year         = {2018}
}

@article{petropoulos2022,
  author       = {Petropoulos, Fotios and Apiletti, Daniele and Assimakopoulos, Vassilios and Babai, Mohamed Zied and Barrow, Devon K and Taieb, Souhaib Ben and Bergmeir, Christoph and Bessa, Ricardo J and Bijak, Jakub and Boylan, John E and others},
  journal      = {International Journal of Forecasting},
  number       = {3},
  pages        = {705--871},
  publisher    = {Elsevier},
  title        = {Forecasting: theory and practice},
  volume       = {38},
  year         = {2022}
}

@article{schlembach2025,
  author       = {Schlembach, Filip and Smirnov, Evgueni and Koprinska, Irena and Winands, Mark H. M.},
  journal      = {Machine Learning},
  number       = {7},
  pages        = {165},
  publisher    = {Springer},
  title        = {Conformal multistep-ahead multivariate time-series forecasting},
  volume       = {114},
  year         = {2025}
}

@article{shafer2008,
  author       = {Shafer, Glenn and Vovk, Vladimir},
  journal      = {Journal of Machine Learning Research},
  number       = {12},
  pages        = {371--421},
  title        = {A Tutorial on Conformal Prediction},
  volume       = {9},
  year         = {2008}
}

@article{shang2026conformal,
  title        = {Conformal prediction for functional time series: application to age-specific mortality rates},
  author       = {Shang, Han Lin},
  journal      = {Journal of Population Research},
  volume       = {43},
  number       = {2},
  pages        = {14},
  year         = {2026},
  publisher    = {Springer}
}

@article{stankeviciute2021,
  author       = {Stankeviciute, Kamile and Alaa, Ahmed M. and van der Schaar, Mihaela},
  journal      = {Advances in Neural Information Processing Systems},
  pages        = {6216--6228},
  title        = {Conformal time-series forecasting},
  volume       = {34},
  year         = {2021}
}

@article{sun2025,
  author       = {Sun, Sophia and Yu, Rose},
  journal      = {Advances in Neural Information Processing Systems},
  pages        = {12243--12272},
  title        = {Conformal prediction for time-series forecasting with change points},
  volume       = {38},
  year         = {2025}
}

@inproceedings{tibshirani2019,
  author       = {Tibshirani, Ryan J. and Barber, Rina Foygel and Cand\`es, Emmanuel J. and Ramdas, Aaditya},
  booktitle    = {Advances in Neural Information Processing Systems},
  pages        = {2530--2540},
  title        = {Conformal Prediction under Covariate Shift},
  volume       = {32},
  year         = {2019}
}

@book{vovk2005,
  author       = {Vovk, Vladimir and Gammerman, Alexander and Shafer, Glenn},
  doi          = {10.1007/b106715},
  publisher    = {Springer-Verlag},
  title        = {Algorithmic Learning in a Random World},
  year         = {2005}
}

@article{wang2024cpts,
  author       = {Wang, Xiaoqian and Hyndman, Rob J},
  journal      = {arXiv preprint arXiv:2410.13115},
  title        = {Online conformal inference for multi-step time series forecasting},
  year         = {2024}
}

@article{Weron2014,
  author       = {Rafał Weron},
  doi          = {10.1016/j.ijforecast.2014.08.008},
  issn         = {0169-2070},
  journal      = {International Journal of Forecasting},
  number       = {4},
  pages        = {1030--1081},
  title        = {Electricity price forecasting: A review of the state-of-the-art with a look into the future},
  volume       = {30},
  year         = {2014}
}

@article{winkler1972,
  author       = {Winkler, Robert L},
  journal      = {Journal of the American Statistical Association},
  number       = {337},
  pages        = {187--191},
  publisher    = {Taylor \& Francis},
  title        = {A decision-theoretic approach to interval estimation},
  volume       = {67},
  year         = {1972}
}

@article{wintenberger2017,
  author       = {Wintenberger, Olivier},
  journal      = {Machine Learning},
  number       = {1},
  pages        = {119--141},
  publisher    = {Springer},
  title        = {Optimal learning with Bernstein online aggregation},
  volume       = {106},
  year         = {2017}
}

@inproceedings{xu2021,
  author       = {Xu, Chen and Xie, Yao},
  booktitle    = {International Conference on Machine Learning},
  organization = {PMLR},
  pages        = {11559--11569},
  title        = {Conformal Prediction Interval for Dynamic Time-Series},
  volume       = {139},
  year         = {2021}
}

@inproceedings{xu2023,
  author       = {Xu, Chen and Xie, Yao},
  booktitle    = {International Conference on Machine Learning},
  organization = {PMLR},
  pages        = {38707--38727},
  title        = {Sequential predictive conformal inference for time series},
  year         = {2023}
}

@article{yang2024ts,
  author       = {Yang, Zitong and Cand{\`e}s, Emmanuel and Lei, Lihua},
  journal      = {arXiv preprint arXiv:2402.05203},
  title        = {Bellman Conformal Inference: Calibrating Prediction Intervals For Time Series},
  year         = {2024}
}

@inproceedings{zaffran2022,
  author       = {Zaffran, Margaux and F{\'e}ron, Olivier and Goude, Yannig and Josse, Julie and Dieuleveut, Aymeric},
  booktitle    = {International Conference on Machine Learning},
  organization = {PMLR},
  pages        = {25834--25866},
  title        = {Adaptive conformal predictions for time series},
  year         = {2022}
}

\clearpage
\appendix
\crefalias{section}{appendixref}
\crefalias{subsection}{appendixref}
\crefalias{subsubsection}{appendixref}
\section{Proofs for the theoretical results}
\label{app:theory-proofs}

\subsection{Fixed-share auxiliary results}
\label{app:fixed-share-lemmas}

Define the exponential normalizer
\begin{equation*}
    Z_t
    :=\sum_{m=1}^M
       w_t^{(m)}\exp\{-\xi\loss_t^{(m)}\}.
\end{equation*}
Then
\begin{equation*}
    \widetilde w_{t+1}^{(m)}
    =
    \frac{w_t^{(m)}\exp\{-\xi\loss_t^{(m)}\}}{Z_t}.
\end{equation*}

\begin{lemma}[One-step exponential-weights inequality]
\label{lem:normalizer-upper}
Under \Cref{ass:finite-meta-loss},
\begin{equation}
    \log Z_t
    \le -\xi\bar\loss_t+\xi^2\bar v_t.
    \label{eq:normalizer-upper}
\end{equation}
\end{lemma}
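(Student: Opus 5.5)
The plan is to combine the elementary bound $e^{-x}\le 1-x+x^2$ for $x\ge0$ with $\log(1+u)\le u$. Under \Cref{ass:finite-meta-loss} the losses are nonnegative and finite and $\xi>0$, so $x_m:=\xi\loss_t^{(m)}\in[0,\infty)$ for every $m$. The weights $w_t^{(m)}$ form a probability vector: they start uniform, and both the exponential reweighting in \Cref{eq:exponential-weight-update} and the fixed-share mixing in \Cref{eq:fixed-share-weight-update} preserve nonnegativity and unit sum. Hence $Z_t$ is a convex combination of the values $e^{-x_m}$, and in particular $Z_t>0$, so $\log Z_t$ is well defined.

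First I would verify the scalar inequality $e^{-x}\le 1-x+x^2$ on $[0,\infty)$. Set $g(x)=1-x+x^2-e^{-x}$. Then $g(0)=0$ and $g'(x)=-1+2x+e^{-x}$. Since $g'(0)=0$ and $g''(x)=2-e^{-x}>0$, we get $g'\ge0$ on $[0,\infty)$, and therefore $g\ge0$ there. An alternative route is the Taylor bound $e^{-x}\le1-x+x^2/2$ for $x\ge0$, which is stronger and also suffices. Using $x^2$ rather than $x^2/2$ only loosens the constant, and this looser form matches the stated \Cref{eq:normalizer-upper}.

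Next, apply the inequality termwise and average against the weights:
\[
Z_t=\sum_{m=1}^M w_t^{(m)}e^{-\xi\loss_t^{(m)}}\le 1-\xi\bar\loss_t+\xi^2\bar v_t .
\]
Since $\log$ is increasing and $Z_t>0$, we have $\log Z_t\le\log(1-\xi\bar\loss_t+\xi^2\bar v_t)$. The right-hand argument is at least $Z_t>0$, so this logarithm is defined. Writing it as $\log(1+u)$ with $u=-\xi\bar\loss_t+\xi^2\bar v_t>-1$ and applying $\log(1+u)\le u$ gives \Cref{eq:normalizer-upper}.

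I do not expect any real obstacle. The only points that need care are that the quadratic bound must hold on the whole half-line, so that no boundedness of the losses is required (only $0\le\loss_t^{(m)}<\infty$), and that the weights must be confirmed to be a probability vector so that $Z_t>0$. Notably, the argument never uses any upper bound on $\xi\loss_t^{(m)}$, which is why the lemma holds under \Cref{ass:finite-meta-loss} alone.
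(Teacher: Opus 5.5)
Your proof is correct and matches the paper's argument step for step: you apply $e^{-x}\le 1-x+x^2$ for $x\ge0$ to each loss, average against the weights, note that the right-hand side is at least $Z_t>0$, and finish with $\log(1+u)\le u$. Your explicit checks that the weights form a probability vector and that the scalar inequality holds on the whole half-line fill in details the paper leaves implicit.
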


\begin{proof}
For $x\ge0$, $\exp(-x)\le1-x+x^2$. Applying this inequality with
$x=\xi\loss_t^{(m)}$, multiplying by $w_t^{(m)}$, and summing over
$m$ gives
\begin{equation*}
    Z_t\le1-\xi\bar\loss_t+\xi^2\bar v_t.
\end{equation*}
The right-hand side is positive because it is at least $Z_t>0$.
Monotonicity of the logarithm and $\log(1+u)\le u$ yield
\Cref{eq:normalizer-upper}.
\end{proof}

\begin{lemma}[Expert recoverability]
\label{lem:fixed-share-floor}
Under \Cref{ass:finite-meta-loss}, for every $t\ge1$ and every expert $m$,
\begin{equation*}
    w_t^{(m)}\ge\frac{\sigma}{M}.
\end{equation*}
\end{lemma}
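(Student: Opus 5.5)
The plan is to split into the initial round and the later rounds, and to read the floor directly off the fixed-share step in \Cref{eq:fixed-share-weight-update}.

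\textbf{Initial round.} For $t=1$, \Cref{ass:finite-meta-loss} prescribes uniform initial weights, so $w_1^{(m)}=1/M$. Since $\sigma\le1/2<1$, this gives $w_1^{(m)}\ge\sigma/M$.

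\textbf{Later rounds.} For $t\ge2$, write $w_t^{(m)}=(1-\sigma)\widetilde w_t^{(m)}+\sigma/M$. The task reduces to showing $\widetilde w_t^{(m)}\ge0$, because $1-\sigma\ge1/2>0$ under the assumption on $\sigma$.

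\textbf{Nonnegativity of the intermediate weights.} I would establish, by a short induction, that every $w_t^{(m)}$ is nonnegative and that the weights sum to one. The base case holds since the initial weights are uniform. For the inductive step, the normalizer $Z_t=\sum_j w_t^{(j)}\exp\{-\xi\loss_t^{(j)}\}$ satisfies $Z_t>0$:

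\begin{itemize}
\item the weights are nonnegative and sum to one, so at least one is positive;
\item each factor $\exp\{-\xi\loss_t^{(j)}\}$ is strictly positive, because \Cref{ass:finite-meta-loss} makes the losses finite.
\end{itemize}

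Hence $\widetilde w_{t+1}^{(m)}$ is well defined and nonnegative, and these intermediate weights sum to one. The fixed-share mixture then preserves nonnegativity and total mass one. This same positivity of $Z_t$ was already used in \Cref{lem:normalizer-upper}.

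\textbf{Conclusion and main obstacle.} Combining the two cases gives $w_t^{(m)}\ge\sigma/M$ for all $t\ge1$ and all $m$. The only step needing care is the well-definedness of the normalization, namely $Z_t>0$, which relies on the finiteness of the losses. Beyond that, the argument is immediate.
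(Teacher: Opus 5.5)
Your proof is correct and matches the paper's argument. Uniform initialization gives the floor at $t=1$, and the fixed-share step together with $\widetilde w_{t+1}^{(m)}\ge0$ gives it for all later rounds; your induction establishing $Z_t>0$ and nonnegativity simply makes explicit a detail the paper takes for granted.
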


\begin{proof}
At initialization, $w_1^{(m)}=1/M\ge\sigma/M$. For $t\ge1$,
\Cref{eq:fixed-share-weight-update} and
$\widetilde w_{t+1}^{(m)}\ge0$ imply
$w_{t+1}^{(m)}\ge\sigma/M$.
\end{proof}

\begin{lemma}[Local comparison with a fixed expert]
\label{lem:normalizer-lower}
Under \Cref{ass:finite-meta-loss}, for every interval $\mathcal I=[r,s]$ and every expert $m$,
\begin{equation}
    \sum_{t=r}^s\log Z_t
    \ge
    \log\frac{\sigma}{M}
    +L\log(1-\sigma)
    -\xi\sum_{t=r}^s\loss_t^{(m)}.
    \label{eq:normalizer-lower}
\end{equation}
\end{lemma}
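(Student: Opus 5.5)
The plan is to follow a single expert's weight through the interval, using the normalizer $Z_t$ to account for the multiplicative changes. From \Cref{eq:exponential-weight-update,eq:fixed-share-weight-update}, for each $t$ and each expert $m$,
\begin{equation*}
    w_{t+1}^{(m)}
    \ge(1-\sigma)\widetilde w_{t+1}^{(m)}
    =(1-\sigma)\frac{w_t^{(m)}\exp\{-\xi\loss_t^{(m)}\}}{Z_t}.
\end{equation*}
This holds because the uniform share term $\sigma/M$ is nonnegative and can be dropped. Under \Cref{ass:finite-meta-loss} all losses are finite, so $Z_t>0$, every weight is positive, and every logarithm below is well defined. Rearranging gives a one-step inequality:
\begin{equation*}
    \log Z_t\ge\log(1-\sigma)-\xi\loss_t^{(m)}+\log w_t^{(m)}-\log w_{t+1}^{(m)}.
\end{equation*}

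Next, I sum this inequality over $t=r,\ldots,s$. The terms $\log w_t^{(m)}-\log w_{t+1}^{(m)}$ telescope to $\log w_r^{(m)}-\log w_{s+1}^{(m)}$. The $\log(1-\sigma)$ term contributes $L\log(1-\sigma)$, and the loss terms contribute $-\xi\sum_{t=r}^s\loss_t^{(m)}$.

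It remains to bound the two boundary weights. For the lower endpoint, \Cref{lem:fixed-share-floor} gives $w_r^{(m)}\ge\sigma/M$. This is where fixed sharing matters, since it guarantees the expert is still available at the start of an arbitrary interval, whatever happened earlier. For the upper endpoint, $w_{s+1}^{(m)}\le1$ because the weights form a probability vector: the exponential step normalizes, and the fixed-share step is a convex combination of two probability vectors. Hence $\log w_r^{(m)}-\log w_{s+1}^{(m)}\ge\log(\sigma/M)$, which yields \Cref{eq:normalizer-lower}.

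There is one small point to check. When $\sigma=1$, the factor $\log(1-\sigma)$ is $-\infty$ and the bound is trivial, but \Cref{ass:finite-meta-loss} restricts to $\sigma\le1/2$, so everything stays finite. I do not expect a real obstacle. The only delicate step is the lower bound on $w_r^{(m)}$ at a general starting time $r$, and \Cref{lem:fixed-share-floor} handles it directly. Nothing in the argument depends on how $\lambda_t$ is generated, so the inequality is pathwise.
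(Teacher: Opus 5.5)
Your proposal is correct and follows essentially the same argument as the paper: drop the $\sigma/M$ share term to get $w_{t+1}^{(m)}\ge(1-\sigma)w_t^{(m)}\exp\{-\xi\loss_t^{(m)}\}/Z_t$, telescope the weight ratios over $[r,s]$, and bound the endpoints using $w_r^{(m)}\ge\sigma/M$ from \Cref{lem:fixed-share-floor} and $w_{s+1}^{(m)}\le 1$. The only difference is cosmetic: you take logarithms before summing, whereas the paper multiplies the one-step inequalities and then takes logarithms.
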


\begin{proof}
The fixed-share update satisfies
\begin{align*}
    w_{t+1}^{(m)}
    &\ge(1-\sigma)\widetilde w_{t+1}^{(m)}\\
    &=(1-\sigma)
      \frac{w_t^{(m)}\exp\{-\xi\loss_t^{(m)}\}}{Z_t}.
\end{align*}
Consequently,
\begin{equation*}
    Z_t\ge(1-\sigma)
    \frac{w_t^{(m)}}{w_{t+1}^{(m)}}
    \exp\{-\xi\loss_t^{(m)}\}.
\end{equation*}
Multiplying from $t=r$ to $s$ telescopes the weight ratios and gives
\begin{equation*}
    \prod_{t=r}^s Z_t
    \ge(1-\sigma)^L
       \frac{w_r^{(m)}}{w_{s+1}^{(m)}}
       \exp\!\left\{-\xi\sum_{t=r}^s\loss_t^{(m)}\right\}.
\end{equation*}
Because $w_{s+1}^{(m)}\le1$ and, by
\Cref{lem:fixed-share-floor}, $w_r^{(m)}\ge\sigma/M$, taking logarithms
proves \Cref{eq:normalizer-lower}.
\end{proof}

\subsection{Proofs of the regret and parameter results}
\label{app:regret-parameter-proofs}

\begin{proof}[Proof of \Cref{thm:local-fixed-expert-regret}]
Summing \Cref{eq:normalizer-upper} over $\mathcal I$ and combining it with
\Cref{eq:normalizer-lower} yields
\begin{equation*}
    \log\frac{\sigma}{M}+L\log(1-\sigma)
    -\xi\sum_{t=r}^s\loss_t^{(m)}
    \le
    -\xi\sum_{t=r}^s\bar\loss_t
    +\xi^2\sum_{t=r}^s\bar v_t.
\end{equation*}
Rearranging and dividing by $\xi$ proves
\Cref{eq:local-regret-exact}. Since
$-\log(1-\sigma)\le2\sigma$ for $0<\sigma\le1/2$,
\Cref{eq:local-regret-simplified} follows.
\end{proof}

\begin{proof}[Proof of \Cref{cor:share-rule}]
For fixed $\xi$, the relevant term is
$g(\sigma)=\log(M/\sigma)+2\sigma L$. It is strictly convex because
$g''(\sigma)=1/\sigma^2>0$, and
$g'(\sigma)=-1/\sigma+2L$ vanishes at $\sigma=1/(2L)$.
For $L\ge1$, this value satisfies $0<\sigma\le1/2$.
\end{proof}

\paragraph{Reference calculation of the learning rate}
After substituting $\sigma=1/(2L)$ into \Cref{eq:local-regret-simplified}, hold $S_\mathcal I>0$ fixed and consider the sum
\begin{equation*}
    \xi S_\mathcal I+\frac{\log(2ML)+1}{\xi}.
\end{equation*}
As a function of $\xi>0$ with $S_\mathcal I$ fixed, its first and second derivatives are
\begin{equation*}
    S_\mathcal I-\frac{\log(2ML)+1}{\xi^2},
    \qquad
    \frac{2\{\log(2ML)+1\}}{\xi^3}>0,
\end{equation*}
respectively. Its unique minimum occurs at \Cref{eq:xi-realized-optimum}. In an actual run, the learning rate affects the realized weights and aggregate coverage feedback, and hence $S_\mathcal I$. The minimization therefore treats $S_\mathcal I$ as fixed. Replacing $V_\mathcal I$ in \Cref{eq:xi-realized-optimum} with the prespecified scale $V$ gives the learning-rate rule in \Cref{eq:aggregation-rules}.

\subsection{Comparison with an external benchmark sequence}
\label{app:target-sequence-comparison}

The fixed-expert comparison extends naturally to an external benchmark sequence
when the library contains a fixed expert that approximates that benchmark well
over the interval of interest. We formalize this requirement as follows.

\begin{assumption}[Expert-library approximation]
\label{ass:grid-approximation}
For a benchmark sequence with finite losses $\{\loss_t^\star\}_{t\in \mathcal I}$, suppose that the library contains an expert $m_\mathcal I$ satisfying
\begin{equation}
    \sum_{t=r}^s\loss_t^{(m_\mathcal I)}
    -\sum_{t=r}^s\loss_t^\star
    \le\Delta_{\mathrm{grid}}(\mathcal I),
    \qquad 0\le\Delta_{\mathrm{grid}}(\mathcal I)<\infty.
    \label{eq:grid-approximation}
\end{equation}
\end{assumption}

\begin{corollary}[Local comparison with a target sequence]
\label{thm:target-sequence-regret}
Under \Cref{ass:finite-meta-loss,ass:grid-approximation},
\begin{equation}
    \sum_{t=r}^s\bar\loss_t
    -\sum_{t=r}^s\loss_t^\star
    \le
    \xi\sum_{t=r}^s\bar v_t
    +\frac{\log(M/\sigma)+2\sigma L}{\xi}
    +\Delta_{\mathrm{grid}}(\mathcal I).
    \label{eq:target-sequence-regret}
\end{equation}
\end{corollary}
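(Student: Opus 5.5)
The plan is to decompose the target-sequence regret through the library expert $m_\mathcal I$ supplied by \Cref{ass:grid-approximation}, and then invoke the fixed-expert bound for that expert. Concretely, we add and subtract $\sum_{t=r}^s\loss_t^{(m_\mathcal I)}$:
\begin{equation*}
    \sum_{t=r}^s\bar\loss_t-\sum_{t=r}^s\loss_t^\star
    =\operatorname{Reg}_\mathcal I(m_\mathcal I)
    +\Bigl(\sum_{t=r}^s\loss_t^{(m_\mathcal I)}-\sum_{t=r}^s\loss_t^\star\Bigr).
\end{equation*}
All sums here are finite sums of finite quantities. The losses $\loss_t^{(m)}$ are finite by \Cref{ass:finite-meta-loss}, and hence so are $\bar\loss_t$ and $\bar v_t$; the benchmark losses $\loss_t^\star$ are finite by \Cref{ass:grid-approximation}. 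The rearrangement is therefore legitimate, with no $\infty-\infty$ issues.

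For the first term we apply \Cref{thm:local-fixed-expert-regret} in its simplified form \Cref{eq:local-regret-simplified} with $m=m_\mathcal I$. This is valid because the theorem holds for any fixed expert and any interval. It also holds pathwise, so it does not matter that $m_\mathcal I$ may be chosen depending on $\mathcal I$ or on the realized path, since the chosen expert is fixed within the interval. This yields the bound $\xi\sum_{t=r}^s\bar v_t+\{\log(M/\sigma)+2\sigma L\}/\xi$. The second term is at most $\Delta_{\mathrm{grid}}(\mathcal I)$ directly by \Cref{eq:grid-approximation}. Adding the two bounds gives \Cref{eq:target-sequence-regret}.

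There is no real obstacle; the argument is a one-line triangle-type decomposition. The only point needing care is to confirm that the data-dependent choice of $m_\mathcal I$ does not invalidate the fixed-expert bound. This is handled by the pathwise, for-every-$m$ nature of \Cref{thm:local-fixed-expert-regret}: the bound holds simultaneously for all $M$ experts on every realization, so in particular it holds for whichever expert satisfies \Cref{eq:grid-approximation}.
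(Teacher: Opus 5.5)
Your proposal is correct and matches the paper's proof: apply \Cref{thm:local-fixed-expert-regret} in its simplified form with $m=m_\mathcal I$, add the approximation inequality from \Cref{ass:grid-approximation}, and the intermediate loss of $m_\mathcal I$ cancels. Your remarks on finiteness and on why a data-dependent choice of $m_\mathcal I$ is harmless, given the pathwise for-every-$m$ validity of the theorem, are sound and make explicit what the paper leaves implicit.
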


The corollary separates aggregation regret from library approximation error. Since $\Delta_{\mathrm{grid}}(\mathcal I)$ is not controlled here as a function of library resolution, the result is an error decomposition rather than a quantitative approximation guarantee.

\begin{proof}[Proof of \Cref{thm:target-sequence-regret}]
Apply \Cref{thm:local-fixed-expert-regret} with $m=m_\mathcal I$ and add
\Cref{eq:grid-approximation}. The intermediate loss of $m_\mathcal I$ cancels,
yielding \Cref{eq:target-sequence-regret}.
\end{proof}

\subsection{An idealized reference-distribution derivation of the second-moment scale}
\label{app:reference-v-derivation}

To motivate a prespecified second-moment scale for the coverage--efficiency loss, we consider an idealized population reference problem. Let $S\ge0$ have a continuous distribution function $F$ and density $f$, with $(1-\alpha)$-quantile $0<Q=F^{-1}(1-\alpha)<\infty$. Assume that $F$ is differentiable at $Q$, with derivative $f(Q)>0$. Consider the population width minimization problem
\begin{equation*}
    \min_{q\ge0}\frac{2q}{B_q}
    \quad\text{subject to}\quad
    \Pr(S>q)\le\alpha.
\end{equation*}
At the calibrated solution $q=Q$, with the ideal normalization $B_q=2Q$,
the normalized width equals one. The Lagrangian is
\begin{equation*}
    \mathcal L(q,\lambda)
    =\frac{2q}{B_q}
      +\lambda\{\Pr(S>q)-\alpha\}.
\end{equation*}
Since $Q>0$, stationarity with respect to $q$ at $Q$ gives
\begin{equation*}
    \lambda^\star
    =\frac{2}{B_q f(Q)}
    =\frac{1}{Qf(Q)}.
\end{equation*}
We use the standardized reference score $U=S/Q$ and take
$U\sim\operatorname{Unif}(0,1/(1-\alpha))$. This reference law satisfies
$\Pr(U\le1)=1-\alpha$ and has $f_U(1)=1-\alpha$, so
\begin{equation*}
    \lambda^\star=\frac{1}{f_U(1)}
    =\frac{1}{1-\alpha}.
\end{equation*}

For a reference calculation over a single update, set every expert threshold, and hence the aggregate threshold, equal to $Q$, and set the multiplier before the update equal to $\lambda^\star$. All expert miscoverage indicators then coincide with the aggregate indicator, with miscoverage probability $\alpha$.

As in \Cref{eq:lambda-update-main,eq:binary-meta-loss}, the current aggregate error updates the multiplier before the expert loss is evaluated. With $B_q=2Q$, a covered observation therefore has loss $1$, while a miscovered observation has loss
\begin{equation*}
    1+\lambda^\star+\nu(1-\alpha).
\end{equation*}
Since these events have probabilities $1-\alpha$ and $\alpha$,
respectively, the resulting reference second moment is
\begin{equation*}
    V_{\mathrm{ref}}
    =(1-\alpha)
     +\alpha\left[
       1+\frac{1}{1-\alpha}+\nu(1-\alpha)
     \right]^2.
\end{equation*}
For the experiments, $\alpha=0.1$ and $\nu=0.02$, yielding $V_{\mathrm{ref}}\approx1.3533$; before evaluation, we set $V$ in \Cref{eq:aggregation-rules} to this rounded value. The calculation describes a population reference state and a single update. It makes no claim that the online multiplier converges to $\lambda^\star$ or that $V_{\mathrm{ref}}$ is a stationary second moment of the algorithm. The resulting $V$ serves as a reference tuning scale, with no claim that it estimates or bounds the realized $V_\mathcal I$ or yields optimal online tuning.

\subsection{Proof of the pathwise aggregate miscoverage bound}
\label{app:multiplier-proof}

\begin{proof}[Proof of \Cref{prop:lambda-certificate}]
Projection onto the nonnegative half-line implies
\begin{equation*}
    \lambda_t
    \ge\lambda_{t-1}+\nu(err_t-\alpha).
\end{equation*}
Hence
$err_t-\alpha\le(\lambda_t-\lambda_{t-1})/\nu$.
Summing from $t=1$ to $T$ telescopes and gives \Cref{eq:lambda-bound}. Taking the upper limit along any realized path for which $\lambda_T/T\to0$ proves \Cref{eq:lambda-asymptotic-upper}.
\end{proof}

\subsection{Proof of stochastic multiplier sublinearity}
\label{app:multiplier-stability-proof}

\begin{proof}[Proof of \Cref{prop:multiplier-mean-square}]
Because $[x]_+^2\le x^2$, the multiplier update satisfies
\begin{align*}
    \lambda_t^2
    &\le
    \left\{\lambda_{t-1}+\nu(err_t-\alpha)\right\}^2
    \notag\\
    &=
    \lambda_{t-1}^2
    +2\nu\lambda_{t-1}(err_t-\alpha)
    +\nu^2(err_t-\alpha)^2.
\end{align*}
Since $err_t\in\{0,1\}$ and $0<\alpha<1$, we have $(err_t-\alpha)^2\le1$. Using the $\mathcal F_{t-1}$-measurability of $\lambda_{t-1}$ gives
\begin{equation*}
    \E[\lambda_t^2\mid\mathcal F_{t-1}]
    \le \lambda_{t-1}^2
    +2\nu\lambda_{t-1}\E[err_t-\alpha\mid\mathcal F_{t-1}]
    +\nu^2.
\end{equation*}
We bound the conditional drift term in two cases. If $\lambda_{t-1}<\Lambda$, then
\begin{equation*}
    2\nu\lambda_{t-1}
    \E[err_t-\alpha\mid\mathcal F_{t-1}]
    \le 2\nu\Lambda(1-\alpha).
\end{equation*}
If $\lambda_{t-1}\ge\Lambda$, then
\Cref{eq:high-multiplier-drift} gives
\begin{equation*}
    2\nu\lambda_{t-1}
    \E[err_t-\alpha\mid\mathcal F_{t-1}]
    \le -2\nu\delta\lambda_{t-1}
    \le0.
\end{equation*}
Therefore, in both cases,
\begin{equation*}
    \E[\lambda_t^2\mid\mathcal F_{t-1}]
    \le\lambda_{t-1}^2+C_\lambda.
\end{equation*}
Taking expectations and iterating from $t=1$ to $T$ yields
\begin{equation*}
    \E[\lambda_T^2]
    \le\lambda_0^2+C_\lambda T.
\end{equation*}
Dividing by $T^2$ proves
\Cref{eq:lambda-l2-sublinear}. Mean-square convergence implies convergence in
probability.

To establish almost-sure sublinearity, apply Markov's inequality along the subsequence $T=k^2$. For every $\varepsilon>0$,
\begin{equation*}
    \sum_{k=1}^{\infty}
    \Pr\!\left(\frac{\lambda_{k^2}}{k^2}>\varepsilon\right)
    \le \frac{1}{\varepsilon^2}
    \sum_{k=1}^{\infty}
    \left(\frac{\lambda_0^2}{k^4}+\frac{C_\lambda}{k^2}\right)
    <\infty.
\end{equation*}
The first Borel--Cantelli lemma, applied to $\varepsilon=1/j$ for positive integers $j$, yields $\lambda_{k^2}/k^2\to0$ almost surely. The multiplier update also gives $|\lambda_t-\lambda_{t-1}|\le\nu$. Hence, for $k^2\le T<(k+1)^2$,
\begin{equation*}
    0\le\frac{\lambda_T}{T}
    \le\frac{\lambda_{k^2}+\nu(2k+1)}{k^2}
    \longrightarrow0
    \qquad\text{almost surely}.
\end{equation*}
This proves the almost-sure conclusion stated after the proposition.
\end{proof}

\subsection{Proof of expected aggregate miscoverage}
\label{app:expected-miscoverage-proof}

\begin{proof}[Proof of \Cref{cor:expected-miscoverage}]
Taking expectations in \Cref{eq:lambda-bound}, using $\lambda_0\ge0$, and
then applying the Cauchy--Schwarz inequality gives
\begin{align*}
    \E\!\left[\frac{1}{T}\sum_{t=1}^T err_t\right]
    &\le
    \alpha+\frac{\E[\lambda_T]-\lambda_0}{\nu T}\\
    &\le
    \alpha+\frac{\E[\lambda_T]}{\nu T}\\
    &\le
    \alpha+\frac{\sqrt{\E[\lambda_T^2]}}{\nu T}.
\end{align*}
Substituting \Cref{eq:lambda-second-moment-bound} proves
\Cref{eq:expected-miscoverage-rate}.
\end{proof}

\section{PID expert libraries}
\label{app:expert-pools}

\subsection{Reference saturation constant}

We construct the PID expert libraries around the reference Conformal PID configuration. The reference saturation constant is determined by the logarithmic rule
\begin{equation}
    C_{\mathrm{sat}}^\star(T_{\mathrm{ref}})
    =
    \frac{2}{\pi}
    \left\{
        \left\lceil
            \delta_{\mathrm{sat}}\log T_{\mathrm{ref}}
        \right\rceil
        -\frac{1}{\log T_{\mathrm{ref}}}
    \right\},
    \qquad
    \delta_{\mathrm{sat}}=0.01.
    \label{eq:csat-reference-rule}
\end{equation}
The logarithm in \Cref{eq:csat-reference-rule} is natural,
$\lceil\cdot\rceil$ denotes the ceiling function, and
$T_{\mathrm{ref}}>1$ is the horizon used to construct the reference
controller. The subscript in $\delta_{\mathrm{sat}}$ distinguishes this
numerical saturation constant.

\subsection{Nonlinear simulation library}

For the nonlinear simulation, we set $T_{\mathrm{ref}}=1000$, giving
\begin{equation*}
    C_{\mathrm{sat}}^\star(1000)
    =0.5444596\ldots.
\end{equation*}
The library comprises all $M=7^3=343$ combinations of the following parameter values:
\begin{align*}
    \eta
    &\in
    \{0.025,0.05,0.075,0.1,0.125,0.15,0.2\},
    \nonumber\\
    K_I
    &\in
    \{0.1,0.25,0.4,0.5,0.75,1.0,1.5\},
    \nonumber\\
    C_{\mathrm{sat}}
    &\in
    \left\{
        0.2,0.3,0.4,
        C_{\mathrm{sat}}^\star(1000),
        0.7,0.85,1.0
    \right\}.
\end{align*}
The reference saturation constant is reported as $0.544$ when rounded.

\subsection{Distribution-shift simulation library}

All three distribution-shift scenarios use the same PID expert library, constructed with reference sequence length $T_{\mathrm{ref}}=6000$. The resulting reference saturation constant is
\begin{equation*}
    C_{\mathrm{sat}}^\star(6000)
    =0.5634410\ldots.
\end{equation*}
The library consists of all $M=7^3=343$ combinations of the following parameter values:
\begin{align*}
    \eta
    &\in
    \{0.025,0.05,0.075,0.1,0.125,0.15,0.2\},
    \nonumber\\
    K_I
    &\in
    \{0.5,1.0,1.5,2.0,2.5,3.0,4.0\},
    \nonumber\\
    C_{\mathrm{sat}}
    &\in
    \{0.2,0.3,0.4,0.56,0.7,0.85,1.0\}.
\end{align*}
The grid uses the rounded reference value $0.56$, whereas the Conformal PID benchmark uses $C_{\mathrm{sat}}^\star(6000)$ without rounding, together with $\eta=0.1$ and $K_I=2$.

\subsection{Real-data expert libraries}

For each forecasting model, let $T_s$ be the number of nonconformity scores available for calibration and $L_0$ the length of the initial score window. We define the reference scales
\begin{equation*}
    B'
    =
    \max_{1\le t\le \min(L_0,T_s)} |s_t|,
    \qquad
    C_{\mathrm{sat}}^\star
    =
    C_{\mathrm{sat}}^\star(T_s).
\end{equation*}
In both applications, $T_s\ge L_0$ and the scores are absolute forecast errors, so $B'=\max_{1\le t\le L_0}s_t$.

The Ordos energy consumption and Capital Bikeshare applications each use all $M=9^3=729$ parameter combinations below:
\begin{align*}
    \eta
    &\in
    \{0.02,0.025,0.05,0.075,0.1,
      0.125,0.15,0.2,0.25\},
    \nonumber\\
    K_I
    &\in
    B'\times
    \{0.125,0.25,0.5,0.75,1,
      1.25,1.5,2,4\},
    \nonumber\\
    C_{\mathrm{sat}}
    &\in
    C_{\mathrm{sat}}^\star\times
    \{0.25,0.35,0.5,0.75,1,
      1.25,1.5,2,2.5\}.
\end{align*}
The Conformal PID benchmark uses the same reference scales:
\begin{equation*}
    \eta=0.1,
    \qquad
    K_I=B',
    \qquad
    C_{\mathrm{sat}}=C_{\mathrm{sat}}^\star.
\end{equation*}

\section{DtACI and AgACI configurations}
\label{app:aci-benchmarks}

The ACI benchmarks aggregate adaptive miscoverage levels before converting them to thresholds. AgACI here denotes an adaptation to calibration levels; the original method aggregates interval endpoints separately \citep{zaffran2022}.

DtACI and AgACI share a library of nine ACI experts, distinguished by their step sizes:
\begin{equation*}
    \gamma
    \in
    \{0.001,0.002,0.004,0.008,0.016,
      0.032,0.064,0.128,0.256\},
    \qquad M_{\mathrm{ACI}}=9.
\end{equation*}

For DtACI, the share parameter and learning rate are set to
\begin{equation*}
    \sigma_{\mathrm{DtACI}}
    =\frac{1}{2L},
    \qquad
    \eta_{\mathrm{DtACI}}
    =
    \sqrt{
        \frac{\log(2M_{\mathrm{ACI}}L)+1}
             {L V_{\mathrm{pin}}}
    },
    \qquad
    V_{\mathrm{pin}}
    =
    \frac{\alpha^2(1-\alpha)^2}{3},
\end{equation*}
where $L$ is the local adaptation window length. AgACI uses adaptive learning rates for individual experts, with stabilization constant $\varepsilon_{\mathrm{AgACI}}=0.001$.

For the nonlinear simulation and real-data applications, both methods use the rolling calibration window
\begin{equation*}
    \mathcal W_t
    =
    \{\max(1,t-L),\ldots,t-1\}.
\end{equation*}
Using only preceding scores, the empirical upper-tail probability of the observed $s_t$ is
\begin{equation*}
    \beta_t
    =
    \frac{
        1+\sum_{r\in\mathcal W_t}\ind\{s_r\ge s_t\}
    }{
        |\mathcal W_t|+1
    }.
\end{equation*}

The aggregate level $\overline\alpha_t$, chosen before observing $s_t$, determines the reported threshold:
\begin{equation*}
    q_t
    =
    \widehat Q_{1-\overline\alpha_t}
    \bigl(\{s_r:r\in\mathcal W_t\}\bigr),
\end{equation*}
where $\widehat Q$ is the higher empirical quantile. We set $q_t=0$ when $\mathcal W_t$ is empty.

For the Gaussian distribution-shift simulations, DtACI and AgACI use a fixed standard normal reference to relate the observed nonconformity score to a tail probability and the adaptive miscoverage level to an interval threshold:
\begin{equation*}
    \beta_t
    =
    2\{1-\Phi(|Y_t|)\},
    \qquad
    q_t
    =
    \Phi^{-1}
    \left(1-\frac{\overline\alpha_t}{2}\right),
\end{equation*}
where $\Phi$ is the standard normal distribution function and $\overline\alpha_t$ is determined before observing $Y_t$. The reference distribution remains fixed across scenarios; adaptation operates through $\overline\alpha_t$, without using the time-varying mean $\mu_t$.

\end{document}